\newtheorem{theorem}{Theorem}
\newtheorem{lemma}{Lemma}
\newtheorem{claim}{Claim}
\newtheorem{definition}{Definition}
\newcommand{\p}{{\rm P}}
\newcommand{\NP}{{\rm NP}}
\begin{document}

\title{Lift-and-Project Integrality Gaps for the Traveling Salesperson Problem}
\author{Thomas Watson\footnote{Computer Science Division, University of California, Berkeley. This material is based upon work supported by the National Science Foundation Graduate Research Fellowship under Grant No.~DGE-0946797 and by the National Science Foundation under Grant No.~CCF-1017403.}}
\maketitle

\begin{abstract}
We study the lift-and-project procedures of Lov{\'a}sz-Schrijver and Sherali-Adams applied to the standard linear programming relaxation of the traveling salesperson problem with triangle inequality. For the asymmetric TSP tour problem, Charikar, Goemans, and Karloff (FOCS 2004) proved that the integrality gap of the standard relaxation is at least $2$. We prove that after one round of the Lov{\'a}sz-Schrijver or Sherali-Adams procedures, the integrality gap of the asymmetric TSP tour problem is at least $3/2$, with a small caveat on which version of the standard relaxation is used. For the symmetric TSP tour problem, the integrality gap of the standard relaxation is known to be at least $4/3$, and Cheung (SIOPT 2005) proved that it remains at least $4/3$ after $o(n)$ rounds of the Lov{\'a}sz-Schrijver procedure, where $n$ is the number of nodes. For the symmetric TSP path problem, the integrality gap of the standard relaxation is known to be at least $3/2$, and we prove that it remains at least $3/2$ after $o(n)$ rounds of the Lov{\'a}sz-Schrijver procedure, by a simple reduction to Cheung's result.
\end{abstract}

%%%%%%%%%%%%%%%%%%%%%%%%%%%%%%%%%%%%%%%%%%%%%%%%%%%%%%%%%%%%%%%%%%%%%%%%%%%%%%
%%%%%%%%%%%%%%%%%%%%%%%%%%%%%%%%%%%%%%%%%%%%%%%%%%%%%%%%%%%%%%%%%%%%%%%%%%%%%%

\section{Introduction}
\label{tsp:sec:intro}

The traveling salesperson problem (TSP) is the following: Given a complete directed graph with nonnegative edge distances satisfying the triangle inequality, the goal is to find a shortest route that visits all the nodes. The TSP is one of the most fundamental and well-studied problems in combinatorial optimization (with whole books devoted to it \cite{Book1,Book2,Book3}), and there are many variants. For example, in the \emph{tour} version the goal is to find a shortest hamiltonian cycle, while in the \emph{path} version we are additionally given two nodes $s$ and $t$ and the goal is to find a shortest hamiltonian path from $s$ to $t$. Along a different dimension, in the \emph{symmetric} version we assume that for every pair of nodes $u$ and $v$ the distance from $u$ to $v$ equals the distance from $v$ to $u$, while in the \emph{asymmetric} version we make no such assumption.

The approximability of the TSP is not well-understood, and there are large gaps in the approximation ratios between the best $\NP$-hardness results and the best approximation algorithms. For all four of the above variants, the best $\NP$-hardness results are due to \cite{PV}: For every constant $\epsilon>0$, it is $\NP$-hard to approximate the symmetric tour and path versions within $220/219-\epsilon$, and it is $\NP$-hard to approximate the asymmetric tour and path versions within $117/116-\epsilon$. The state-of-the-art polynomial-time approximation algorithms achieve approximation ratios $3/2$ for the symmetric tour version \cite{Chr}, $5/3$ for the symmetric path version \cite{Hoo}, and $O(\log n/\log\log n)$ for the asymmetric tour and path versions \cite{AGMOGS,FS} where $n$ is the number of nodes. There are better algorithms for several important restricted classes of distance functions. For example, there have been breakthroughs for shortest path metrics on unweighted undirected graphs: For the symmetric tour version, a $(3/2-\epsilon_0)$-approximation (for some constant $\epsilon_0>0$) is obtained in \cite{OGSS} and a $1.461$-approximation is obtained in \cite{MS}, and for the symmetric path version, a $(5/3-\epsilon_0)$-approximation (for some constant $\epsilon_0>0$) is obtained in \cite{OGSS,AS} and a $1.586$-approximation is obtained in \cite{MS}. The symmetric tour version can be approximated within $1+\epsilon$ (for every constant $\epsilon>0$) for Euclidean distances in constant-dimensional real space \cite{Aro} and for shortest path metrics on weighted undirected planar graphs \cite{AGKKW,Kle}. The asymmetric tour version has a $O(1)$-approximation algorithm for shortest path metrics on weighted directed planar graphs \cite{OGS}.

For problems such as TSP where the known $\NP$-hardness lower bounds on approximation ratios are quite weak, a natural goal is to get stronger lower bounds for restricted classes of algorithms. Such lower bounds can sometimes be construed as evidence that the problem is indeed hard to approximate within the stronger bound, and they may have the additional advantage of being unconditional ($\NP$-hardness lower bounds are conditional on $\p\ne\NP$). For many combinatorial optimization problems, one very general class of algorithms is those that solve some linear programming relaxation of the problem, ``round'' the solution to an integral solution in some way, and derive their approximation guarantee by comparing the value of the rounded solution to the optimal value of the relaxation. The integrality gap of an instance is the ratio between the optimal values of the original problem and the relaxation. The existence of a family of instances with integrality gap at least $\alpha>1$ proves that no such rounding algorithm can achieve approximation ratio better than $\alpha$. Thus exhibiting instances with large integrality gaps constitutes an unconditional lower bound against this class of algorithms. The integrality gap of a relaxation for a problem is defined to be the maximum integrality gap over all instances (expressed as a function of instance size).

For the above variants of the traveling salesperson problem, there is a classic and well-studied linear programming relaxation which has been called the Dantzig-Fulkerson-Johnson relaxation (after \cite{DFJ}), the Held-Karp relaxation (after \cite{HK1,HK2}), and the subtour elimination relaxation. Since there is no consensus on what to call it, we simply refer to it as the \emph{standard relaxation}. Many of the known approximation algorithms for TSP work by (implicitly or explicitly) rounding solutions to the standard relaxation. For the symmetric tour version, the integrality gap is known to be at least $4/3$. It is conjectured to be exactly $4/3$, and proving this is a notorious open problem in combinatorial optimization. The best upper bound on the integrality gap is $3/2$ \cite{Wol,SW}, though in the case of shortest path metrics on unweighted undirected graphs the upper bound has been improved to $3/2-\epsilon_0$ (for some constant $\epsilon_0>0$) \cite{OGSS} and further to $1.461$ \cite{MS}. For the symmetric path version, the integrality gap is known to be at least $3/2$ and at most $5/3$ \cite{AS}, though in the case of shortest path metrics on unweighted undirected graphs the upper bound has been improved to $5/3-\epsilon_0$ (for some constant $\epsilon_0>0$) \cite{OGSS,AS} and further to essentially $1.586$ \cite{MS}. For the asymmetric tour version, the integrality gap is known to be at least $2$ \cite{CGK} and at most $O(\log n/\log\log n)$ \cite{AGMOGS}. For the asymmetric path version, the integrality gap is known to be at least $2$ \cite{CGK} and at most $O(\log n)$ \cite{FSS}.

Although integrality gap lower bounds rule out algorithms that derive their approximation guarantees by comparing the output to the optimum of a particular relaxation of the problem, one can always construct tighter and tighter relaxations in hope of reducing the integrality gap and getting improved approximation guarantees. Hence it is desirable to prove lower bounds for large \emph{classes} of relaxations, since this constitutes stronger evidence for the hardness of the problem. Lov{\'a}sz and Schrijver \cite{LS} and Sherali and Adams \cite{SA} introduced powerful and general techniques for constructing hierarchies of tighter and tighter relaxations for any problem (see \cite{Tul} for a survey). These techniques are called ``lift-and-project'' procedures, and there has been substantial interest in proving integrality gap lower bounds for various relaxations produced by these procedures (a small sample of results includes \cite{AAT,STT,GMPT,CMM,GMT}). A nice feature of such results is that lower bounds for sufficiently many rounds of the procedure allow us to rule out slightly-subexponential time rounding algorithms, something that PCP-based results often fail to do.

We consider the task of proving integrality gap lower bounds for lift-and-project procedures applied to the standard relaxation of the TSP. There is only one previous paper on this topic: Cheung \cite{Che} proved an essentially tight result for the Lov{\'a}sz-Schrijver procedure applied to the symmetric TSP tour problem --- he showed that the integrality gap remains at least $4/3$ after $o(n)$ rounds of the procedure for $n$-node instances. A natural question is whether the integrality gap lower bound of $2$ for the asymmetric TSP tour problem \cite{CGK} survives these lift-and-project procedures. Our main theorem is a positive result in this direction: We prove that the integrality gap is at least $3/2$ after one round of the Lov{\'a}sz-Schrijver procedure (which is equivalent to one round of the Sherali-Adams procedure). There is a small caveat though: There are two versions of the standard relaxation which are both widely used in the literature and which are equivalent in a certain sense, but one seems to become somewhat weaker than the other after lift-and-project procedures are applied. We can only prove our lower bound for the weaker version (see Section~\ref{tsp:sec:intro:results} for details).

One key challenge in proving Lov{\'a}sz-Schrijver integrality gap lower bounds is in designing so-called protection matrices. None of the previous techniques for designing protection matrices (for problems such as vertex cover, constraint satisfaction problems, and symmetric TSP) seem to help with asymmetric TSP. We introduce a new technique for our setting, based on finding certain combinatorial objects we call \emph{frames} in directed graphs.

Another natural question is whether the folklore integrality gap lower bound of $3/2$ for the symmetric TSP path problem survives lift-and-project procedures. We show that it does, in fact, survive $o(n)$ rounds of the Lov{\'a}sz-Schrijver procedure, by giving a simple reduction to Cheung's result \cite{Che}. The reduction is not generic, but rather exploits specific properties of Cheung's family of instances. This result can be considered evidence that the symmetric TSP path problem cannot be approximated better than $3/2$ (even by slightly-subexponential time algorithms). The fact that we get a lower bound of $3/2$ in both this result and our result for the asymmetric TSP tour problem is a coincidence; the techniques used are quite different.

%%%%%%%%%%%%%%%%%%%%%%%%%%%%%%%%%%%%%%%%%%%%%%%%%%%%%%%%%%%%%%%%%%%%%%%%%%%%%%
%%%%%%%%%%%%%%%%%%%%%%%%%%%%%%%%%%%%%%%%%%%%%%%%%%%%%%%%%%%%%%%%%%%%%%%%%%%%%%

\subsection{Definitions}
\label{tsp:sec:intro:definitions}

In Section~\ref{tsp:sec:intro:definitions:gap} we give the definitions of integrality gaps and the Lov{\'a}sz-Schrijver lift-and-project procedure. In Section~\ref{tsp:sec:intro:definitions:tsp} we give the definitions of the relevant variants of the traveling salesperson problem and their standard linear programming relaxations.

%%%%%%%%%%%%%%%%%%%%%%%%%%%%%%%%%%%%%%%%%%%%%%%%%%%%%%%%%%%%%%%%%%%%%%%%%%%%%%
%%%%%%%%%%%%%%%%%%%%%%%%%%%%%%%%%%%%%%%%%%%%%%%%%%%%%%%%%%%%%%%%%%%%%%%%%%%%%%

\subsubsection{Lift-and-Project Integrality Gaps}
\label{tsp:sec:intro:definitions:gap}

For our purpose a \emph{(combinatorial optimization) problem} is a set of instances, where each instance has a set of points $I\subseteq\{0,1\}^m$ and a nonnegative convex objective function $f$, and the goal is to minimize $f$ over points in $I$. In the notation, we suppress the dependence of $I$, $m$, and $f$ on the instance. The instances of a problem may have some size parameter $n$ (not necessarily equal to $m$). A \emph{relaxation} of a problem associates to each instance a convex set $R\subseteq[0,1]^m$ such that $I\subseteq R$. Given a relaxation of a problem, the \emph{integrality gap} of an instance is defined to be $\frac{\min_{x\in I}f(x)}{\min_{x\in R}f(x)}$, and the integrality gap of the relaxation is defined to be the maximum integrality gap of an instance, expressed as a function of the instance size $n$. If the objective function for a problem is always linear, and the set $R$ for a relaxation is always a polytope, then we call the relaxation a \emph{linear programming relaxation}.

Lov{\'a}sz and Schrijver \cite{LS} introduced a general technique for improving linear programming relaxations. Given a convex set $R\subseteq[0,1]^m$, the Lov{\'a}sz-Schrijver procedure produces a new convex set $N(R)\subseteq[0,1]^m$ which is a polytope if $R$ is. The procedure can be applied iteratively, and we use $N^r(R)$ to denote the convex set after $r$ rounds (for example $N^2(R)=N(N(R))$). The $N$ operator has the following properties: $R\supseteq N(R)\supseteq\text{conv}(R\cap\{0,1\}^m)$ and $N^m(R)=\text{conv}(R\cap\{0,1\}^m)$ where $\text{conv}(\cdot)$ denotes convex hull, and thus \[R~\supseteq~N(R)~\supseteq~N^2(R)~\supseteq~\cdots~\supseteq~N^{m-1}(R)~\supseteq~N^m(R)~=~\text{conv}(R\cap\{0,1\}^m).\] Hence the procedure can be applied to a relaxation of a combinatorial optimization problem to yield new relaxations, where the integrality gap is nonincreasing with the number of rounds and drops to $1$ after at most $m$ rounds if $R\cap\{0,1\}^m=I$. Furthermore, if $R$ has a polynomial-time separation oracle, then linear objectives can be optimized over $N^r(R)$ in time $m^{O(r)}$.

We now define the $N$ operator. Given a convex set $R\subseteq[0,1]^m$ we define \[\text{cone}(R)~=~\big\{(\lambda,\lambda x_1,\ldots,\lambda x_m)~:~\text{$\lambda\ge 0$ and $(x_1,\ldots,x_m)\in R$}\big\}\] and we index the $\lambda$ coordinate by $0$. For an $(m+1)\times(m+1)$ matrix $X$ we index the top row and left column by $0$, and we use $X_i$ to denote the $i$th row ($i\in\{0,1,\ldots,m\}$). Given a convex set $R\subseteq[0,1]^m$, a point $x\in[0,1]^m$ is in $N(R)$ if and only if there exists an $(m+1)\times(m+1)$ matrix $X$, called a \emph{protection matrix}, such that
\begin{itemize}
\item[(i)] $X$ is symmetric,
\item[(ii)] $X_0=\text{diag}(X)=(1~~x)$, and
\item[(iii)] $X_i\in\text{cone}(R)$ and $X_0-X_i\in\text{cone}(R)$ for each $i\in\{1,\ldots,m\}$.
\end{itemize}
The procedure is called a ``lift-and-project'' procedure because it first lifts $R$ to a convex set in $[0,1]^{(m+1)^2}$ (namely the set of matrices satisfying (i), (ii), and (iii) for some $x$) which is a polytope if $R$ is, and then projects on certain coordinates to get $N(R)$. Lov{\'a}sz and Schrijver also introduced a stronger operator $N_+$ which is defined in the same way as $N$ except the protection matrix is also required to be positive semidefinite. For all convex sets $R\subseteq[0,1]^m$ we have $N_+(R)\subseteq N(R)$ and \[R~\supseteq~N_+(R)~\supseteq~\cdots~\supseteq~N_+^m(R)~=~\text{conv}(R\cap\{0,1\}^m).\]

Sherali and Adams \cite{SA} introduced a lift-and-project procedure for improving linear programming relaxations which is more powerful than the Lov{\'a}sz-Schrijver procedure, but also more difficult to prove integrality gap lower bounds for. One round of Sherali-Adams coincides with one round of Lov{\'a}sz-Schrijver. We do not define the procedure for higher rounds here.

%%%%%%%%%%%%%%%%%%%%%%%%%%%%%%%%%%%%%%%%%%%%%%%%%%%%%%%%%%%%%%%%%%%%%%%%%%%%%%
%%%%%%%%%%%%%%%%%%%%%%%%%%%%%%%%%%%%%%%%%%%%%%%%%%%%%%%%%%%%%%%%%%%%%%%%%%%%%%

\subsubsection{The Traveling Salesperson Problem}
\label{tsp:sec:intro:definitions:tsp}

We now formally define the traveling salesperson problem and its standard linear programming relaxation \cite{DFJ,HK1,HK2}. For completeness, we define all four variants discussed above. Given an undirected graph $G=(V,E)$ and a set $S\subseteq V$ we let $\delta(S)$ denote the set of edges crossing the cut $(S,\overline{S})$. For $v\in V$ we define $\delta(v)$ to be $\delta(\{v\})$. Given a directed graph $G=(V,E)$ and a set $S\subseteq V$ we let $\delta^+(S)$ denote the set of edges leaving $S$ and $\delta^-(S)$ denote the set of edges entering $S$. For $v\in V$ we define $\delta^+(v)$ to be $\delta^+(\{v\})$ and $\delta^-(v)$ to be $\delta^-(\{v\})$. For every graph $G=(V,E)$ (undirected or directed), given a set $F\subseteq E$ and a vector $x=(x_e)_{e\in E}$ we define $x(F)=\sum_{e\in F}x_e$. Also, given two vectors $d=(d_e)_{e\in E}$ and $x=(x_e)_{e\in E}$ we define $d\cdot x=\sum_{e\in E}d_ex_e$. We say a directed graph is complete if every pair of nodes has both edges between them and there are no self loops. We use $K_n$ to denote both the complete undirected graph on $n$ nodes and the complete directed graph on $n$ nodes; it will always be clear from context which is meant.

\bigskip\noindent{\bf Symmetric tour version.}
Given an undirected graph $G=(V,E)$, the \emph{symmetric tour polytope} $ST(G)$ has variables $x=(x_e)_{e\in E}$ and is defined by the following constraints.
\begin{alignat*}{2}
x(\delta(S))~&\ge~2&&\hspace{1cm}\forall~\emptyset\subsetneq S\subsetneq V\\
x(\delta(v))~&=~2&&\hspace{1cm}\forall~v\in V\\
x_e~&\in~[0,1]&&\hspace{1cm}\forall~e\in E
\end{alignat*}
We define $ST^\text{int}(G)$ similarly but require $x_e\in\{0,1\}$ for all $e\in E$; note that $ST^\text{int}(G)$ consists of exactly the hamiltonian cycles in $G$. The \emph{symmetric TSP tour problem} is the following: Given a complete undirected graph $K_n=(V,E)$ with nonnegative edge distances $d=(d_e)_{e\in E}$ satisfying the triangle inequality, minimize $d\cdot x$ subject to $x\in ST^\text{int}(K_n)$. The standard relaxation allows $x\in ST(K_n)$.

\bigskip\noindent{\bf Symmetric path version.}
Given an undirected graph $G=(V,E)$ and distinct nodes $s,t\in V$ (the case $s=t$ is covered by the symmetric tour version), the \emph{symmetric path polytope} $SP(G,s,t)$ has variables $x=(x_e)_{e\in E}$ and is defined by the following constraints.
\begin{alignat*}{2}
x(\delta(S))~&\ge~2&&\hspace{1cm}\forall~\emptyset\subsetneq S\subsetneq V\text{ with }\big|S\cap\{s,t\}\big|\ne 1\\
x(\delta(S))~&\ge~1&&\hspace{1cm}\forall~\emptyset\subsetneq S\subsetneq V\text{ with }\big|S\cap\{s,t\}\big|=1\\
x(\delta(v))~&=~2&&\hspace{1cm}\forall~v\in V\backslash\{s,t\}\\
x(\delta(v))~&=~1&&\hspace{1cm}\forall~v\in \{s,t\}\\
x_e~&\in~[0,1]&&\hspace{1cm}\forall~e\in E
\end{alignat*}
We define $SP^\text{int}(G,s,t)$ similarly but require $x_e\in\{0,1\}$ for all $e\in E$; note that $SP^\text{int}(G,s,t)$ consists of exactly the hamiltonian paths in $G$ with $s$ and $t$ as their endpoints. The \emph{symmetric TSP path problem} is the following: Given a complete undirected graph $K_n=(V,E)$ with nonnegative edge distances $d=(d_e)_{e\in E}$ satisfying the triangle inequality, and given distinct nodes $s,t\in V$, minimize $d\cdot x$ subject to $x\in SP^\text{int}(K_n,s,t)$. The standard relaxation allows $x\in SP(K_n,s,t)$.

\newpage

\bigskip\noindent{\bf Asymmetric tour version.}
Given a directed graph $G=(V,E)$, the \emph{asymmetric tour polytope} $AT(G)$ has variables $x=(x_e)_{e\in E}$ and is defined by the following constraints.
\begin{alignat*}{2}
x(\delta^+(S))~&\ge~1&&\hspace{1cm}\forall~\emptyset\subsetneq S\subsetneq V\\
x(\delta^-(S))~&\ge~1&&\hspace{1cm}\forall~\emptyset\subsetneq S\subsetneq V\\
x(\delta^+(v))~&=~1&&\hspace{1cm}\forall~v\in V\\
x(\delta^-(v))~&=~1&&\hspace{1cm}\forall~v\in V\\
x_e~&\in~[0,1]&&\hspace{1cm}\forall~e\in E
\end{alignat*}
Of course, the second group of constraints is redundant given the first, but we prefer to include it for aesthetic reasons. We define $AT^\text{int}(G)$ similarly but require $x_e\in\{0,1\}$ for all $e\in E$; note that $AT^\text{int}(G)$ consists of exactly the hamiltonian cycles in $G$. The \emph{asymmetric TSP tour problem} is the following: Given a complete directed graph $K_n=(V,E)$ with nonnegative edge distances $d=(d_e)_{e\in E}$ satisfying the triangle inequality, minimize $d\cdot x$ subject to $x\in AT^\text{int}(K_n)$. The standard relaxation allows $x\in AT(K_n)$.

\bigskip\noindent{\bf Asymmetric path version.}
Given a directed graph $G=(V,E)$ and distinct nodes $s,t\in V$ (the case $s=t$ is covered by the asymmetric tour version), the \emph{asymmetric path polytope} $AP(G,s,t)$ has variables $x=(x_e)_{e\in E}$ and is defined by the following constraints.
\begin{alignat*}{2}
x(\delta^+(S))~&\ge~1&&\hspace{1cm}\forall~\emptyset\subsetneq S\subseteq V\backslash\{t\}\\
x(\delta^-(S))~&\ge~1&&\hspace{1cm}\forall~\emptyset\subsetneq S\subseteq V\backslash\{s\}\\
x(\delta^+(v))~&=~1&&\hspace{1cm}\forall~v\in V\backslash\{t\}\\
x(\delta^-(v))~&=~1&&\hspace{1cm}\forall~v\in V\backslash\{s\}\\
x(\delta^+(t))~&=~0&&\\
x(\delta^-(s))~&=~0&&\\
x_e~&\in~[0,1]&&\hspace{1cm}\forall~e\in E
\end{alignat*}
The importance of the constraints $x(\delta^+(t))=x(\delta^-(s))=0$ was clarified in \cite{Nag}. We define $AP^\text{int}(G,s,t)$ similarly but require $x_e\in\{0,1\}$ for all $e\in E$; note that $AP^\text{int}(G,s,t)$ consists of exactly the hamiltonian paths in $G$ from $s$ to $t$. The \emph{asymmetric TSP path problem} is the following: Given a complete directed graph $K_n=(V,E)$ with nonnegative edge distances $d=(d_e)_{e\in E}$ satisfying the triangle inequality, and given distinct nodes $s,t\in V$, minimize $d\cdot x$ subject to $x\in AP^\text{int}(K_n,s,t)$. The standard relaxation allows $x\in AP(K_n,s,t)$.

\bigskip Although the cut constraints in the four polytopes $ST$, $SP$, $AT$, $AP$ look rather different from each other, they are actually very uniform: Every cut should have capacity at least $2$ (symmetric versions) or at least $1$ in both directions (asymmetric versions), unless the cut separates $s$ and $t$ in the path versions, in which case it should have capacity at least $1$ (symmetric version) or at least $1$ from $s$ to $t$ (asymmetric version). Note that although all four polytopes have exponentially many constraints, they each have a polynomial-time separation oracle using min-cut computations. Hence linear objectives can be optimized over these polytopes in polynomial time using the ellipsoid algorithm.

%%%%%%%%%%%%%%%%%%%%%%%%%%%%%%%%%%%%%%%%%%%%%%%%%%%%%%%%%%%%%%%%%%%%%%%%%%%%%%
%%%%%%%%%%%%%%%%%%%%%%%%%%%%%%%%%%%%%%%%%%%%%%%%%%%%%%%%%%%%%%%%%%%%%%%%%%%%%%

\subsection{Results}
\label{tsp:sec:intro:results}

There is only one previous result on lift-and-project integrality gaps for the traveling salesperson problem: Cheung \cite{Che} proved that the integrality gap of the standard relaxation of the symmetric TSP tour problem remains at least $4/3-o(1)$ even after $o(n)$ rounds of the Lov{\'a}sz-Schrijver $N_+$ procedure. We consider whether a lower bound better than $4/3$ can be obtained for the \emph{asymmetric} TSP tour problem. Although we cannot prove a better lower bound for the standard relaxation, we can prove a new result for the following relaxation, which is equivalent to the standard relaxation in a certain sense.

Given a directed graph $G=(V,E)$, the \emph{balanced asymmetric tour polytope} $AT_\text{bal}(G)$ has variables $x=(x_e)_{e\in E}$ and is defined by the following constraints.
\begin{alignat*}{2}
x(\delta^+(S))~&\ge~1&&\hspace{1cm}\forall~\emptyset\subsetneq S\subsetneq V\\
x(\delta^-(S))~&\ge~1&&\hspace{1cm}\forall~\emptyset\subsetneq S\subsetneq V\\
x(\delta^+(v))~&=~x(\delta^-(v))&&\hspace{1cm}\forall~v\in V\\
x_e~&\in~[0,1]&&\hspace{1cm}\forall~e\in E
\end{alignat*}
As before, the second group of constraints is redundant given the first, but we prefer to include it for aesthetic reasons. We define $AT_\text{bal}^\text{int}(G)$ similarly but require $x_e\in\{0,1\}$ for all $e\in E$.

We have $AT(G)\subseteq AT_\text{bal}(G)$ and thus the balanced asymmetric tour polytope yields a relaxation for the asymmetric TSP tour problem, which we call the \emph{balanced standard relaxation}. It is well-known that $AT_\text{bal}(K_n)$ is equivalent to $AT(K_n)$, and $AT_\text{bal}^\text{int}(K_n)$ is equivalent to $AT^\text{int}(K_n)$, in the sense that they have the same minimum values under any objective $d\cdot x$ where $d$ satisfies the triangle inequality (see for example \cite{CGK,Ngu}). For this reason, both formulations are very commonly used in the literature and are considered interchangeable. However, they are not necessarily interchangeable after lift-and-project procedures are applied; $N(AT(K_n))$ might have a smaller integrality gap than $N(AT_\text{bal}(K_n))$. The integrality gap of $AT_\text{bal}(K_n)$ does drop to $1$ after at most $m=n(n-1)$ rounds of $N$, due to the equivalence of $AT_\text{bal}^\text{int}(K_n)$ and $AT^\text{int}(K_n)$.

\begin{theorem}
\label{tsp:thm:asym}
The integrality gap of one round of the Lov{\'a}sz-Schrijver / Sherali-Adams procedure applied to the balanced standard relaxation of the asymmetric TSP tour problem is at least $3/2-o(1)$.
\end{theorem}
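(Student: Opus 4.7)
The plan is to exhibit, for infinitely many $n$, an instance of asymmetric TSP tour on $K_n$ whose integer optimum is at least $(3/2-o(1))\cdot\tau$ for some quantity $\tau$, together with a fractional point $x^*\in N(AT_\text{bal}(K_n))$ of value at most $\tau$. The starting ingredient would be the Charikar--Goemans--Karloff family of instances (or a variant tailored to our purpose), where the underlying metric is shortest-path distance in an auxiliary sparse directed graph $H$. One picks $H$ so that its ``canonical'' fractional tour $x^*$, supported on the arcs of $H$ with well-chosen rational weights, has value $\tau$, while every Hamiltonian cycle in $K_n$ is forced to pay at least $(3/2)\tau$. The CGK construction already gives ratio $2$ for the unlifted relaxation, so there is slack to spare when proving membership in the once-lifted relaxation.

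The core technical step is to produce a protection matrix $X$ certifying $x^*\in N(AT_\text{bal}(K_n))$. For every arc $e$ with $x^*_e\in(0,1)$ I must define a pair of ``conditional'' fractional solutions
\[ y^e \;=\; \tfrac{1}{x^*_e}\,X_e, \qquad z^e \;=\; \tfrac{1}{1-x^*_e}\,(X_0-X_e), \]
each lying in $AT_\text{bal}(K_n)$, corresponding respectively to conditioning on $x_e=1$ and $x_e=0$, with $x^*$ itself being their $x^*_e$-weighted average. Intuitively, $y^e$ is obtained from $x^*$ by pushing an extra $(1-x^*_e)$ units of flow through $e$ and canceling $(1-x^*_e)$ units elsewhere to restore balance and all cut constraints, while $z^e$ is obtained by removing the $x^*_e$ units of flow through $e$ and replacing them by a detour around $e$'s endpoints. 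Both operations need to be local surgeries on $x^*$ that preserve feasibility in $AT_\text{bal}(K_n)$, and the gadgets for performing them in a directed graph with no slack to spare are precisely the \emph{frames} that the paper introduces: combinatorial structures sitting in $H$ near each arc that supply the needed rerouting capacity in both directions.

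With these surgeries defined, I would populate $X$ entrywise so that the $(e,f)$ entry records the $f$-mass of $y^e$ scaled by $x^*_e$, and verify properties (ii) and (iii) of the definition of $N$. The genuinely difficult condition is (i), symmetry: the $(e,f)$-entry, encoding the $f$-mass of $y^e$ scaled by $x^*_e$, must equal the $(f,e)$-entry, encoding the $e$-mass of $y^f$ scaled by $x^*_f$. My plan is to exploit the rich automorphism group of the CGK-style instance, defining $X$ as an average over automorphisms so that symmetry becomes automatic, and then to verify that for each pair of arcs the two candidate local surgeries are compatible in the sense needed for their joint-probability interpretation. This is the step I expect to be the main obstacle, because the frames must be chosen simultaneously for every arc in a globally consistent way; the freedom to use a \emph{balanced} rather than equality-constrained polytope is what gives enough wiggle room to make this work, which is the reason the theorem is only proved for $AT_\text{bal}$.

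Once symmetry is in hand and the cone-containment conditions are checked -- each of which reduces to verifying cut and balance inequalities on the explicit flows $y^e$ and $z^e$, themselves obtained from $x^*$ by frame-mediated rerouting -- we conclude $x^*\in N(AT_\text{bal}(K_n))$. The integrality gap after one round is therefore at least $(3/2)\tau/\tau=3/2-o(1)$, and since one round of Sherali--Adams coincides with one round of Lov{\'a}sz--Schrijver, the theorem follows.
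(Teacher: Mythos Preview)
Your overall architecture matches the paper's: CGK instances, a uniform fractional point supported on the sparse graph, a protection matrix built from ``condition on $x_e=1$'' and ``condition on $x_e=0$'' vectors obtained by frame-mediated rerouting, and the observation that the balanced polytope is essential because those conditional vectors have vertex in/out-degree $\ne 1$. Where the proposal has a genuine gap is the mechanism for symmetry of $X$.

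The CGK graphs $L_{k,r}$ do \emph{not} have a rich automorphism group---essentially only a global reflection---so averaging over automorphisms cannot force $X_{e_1,e_2}=X_{e_2,e_1}$ for general pairs; edges at different levels of the recursion lie in different orbits, and even within a level the group is far from acting $2$-transitively on edges. The paper obtains symmetry by a different, purely combinatorial mechanism: it takes the uniform point $x_e=2/3$, sets $X_{e_1,e_2}=1/3$ if $e_2\in F_{e_1}$ and $1/2$ otherwise, and then constructs the frames so that they satisfy the \emph{reciprocity} property
\[
e_2\in F_{e_1}\quad\Longleftrightarrow\quad e_1\in F_{e_2}\qquad\text{for all }e_1,e_2,
\]
which is literally the symmetry of $X$. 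Establishing reciprocity is the main technical work (Claim~\ref{tsp:clm:frame}) and is what forces the frames to contain, in addition to the obvious detour path, a carefully chosen collection of edge-disjoint cycles; the naive ``shortest detour'' frames fail reciprocity. Your phrase ``globally consistent'' gestures toward the right requirement, but the proposal's stated plan---symmetrize by averaging, then check pairwise compatibility---would not go through on these instances, and the concrete reciprocity condition together with the cycle-augmented frame construction that realizes it is the missing idea.
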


We prove Theorem~\ref{tsp:thm:asym} in Section~\ref{tsp:sec:asym}. Note that the lower bound of $3/2-o(1)$ beats the lower bound of $4/3-o(1)$ that follows trivially from Cheung's result for the symmetric TSP tour problem. There are four deficiencies in Theorem~\ref{tsp:thm:asym} that it would be nice to overcome.
\begin{itemize}
\item[(1)] We only handle a single round of lift-and-project; ideally we would like to handle at least a superconstant number of rounds.
\item[(2)] We only handle the balanced standard relaxation; ideally we would like to handle the standard relaxation as defined in Section~\ref{tsp:sec:intro:definitions:tsp}.
\item[(3)] The integrality gap lower bound is only $3/2-o(1)$; ideally we would like to match the lower bound of $2-o(1)$ due to \cite{CGK} which holds for zero rounds of lift-and-project.
\item[(4)] The lower bound only holds for the Lov{\'a}sz-Schrijver $N$ procedure; ideally we would like to handle the stronger $N_+$ procedure.
\end{itemize}

The instances we use to witness Theorem~\ref{tsp:thm:asym} are the same as the ones constructed in \cite{CGK}, but we need a new analysis showing that a certain fractional solution survives one round of lift-and-project. The heart of the analysis involves finding certain sets of edges, which we call \emph{frames}, in the graphs constructed in \cite{CGK}. We need to find a frame associated with each edge, with the property that for every pair of edges $e_1$ and $e_2$, $e_2$ is in $e_1$'s frame if and only if $e_1$ is in $e_2$'s frame (this property is related to the symmetry requirement for protection matrices).

We now turn to the symmetric TSP path problem. By giving a simple reduction to Cheung's result \cite{Che}, we show that the folklore $3/2-o(1)$ lower bound on the integrality gap of the standard relaxation survives $o(n)$ rounds of the Lov{\'a}sz-Schrijver $N_+$ procedure. The reduction is not a generic reduction, but rather exploits special properties of Cheung's family of instances.

\begin{theorem}
\label{tsp:thm:sym}
The integrality gap of $o(n)$ rounds of the Lov{\'a}sz-Schrijver $N_+$ procedure applied to the standard relaxation of the symmetric TSP path problem is at least $3/2-o(1)$. Moreover, this lower bound holds even for instances that are shortest path metrics on unweighted undirected graphs.
\end{theorem}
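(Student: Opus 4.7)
The plan is to reduce to Cheung's result. Let $\{G_n\}$ be Cheung's family of unweighted graphs whose shortest-path metrics give tour instances with fractional value $L_n$ certified to lie in $N_+^{r_n}(ST(K_n))$ for some $r_n=o(n)$, while every integer tour has length at least $(4/3-o(1))L_n$. Exploiting the specific structure of Cheung's construction, I would identify two vertices $s_n,t_n$ enjoying two properties: (i) Cheung's witness $x^*$ satisfies $x^*_{s_n t_n}=1$, and (ii) the metric distance $d(s_n,t_n)=(1/3+o(1))L_n$. This is where the reduction stops being generic; these properties must be read off from Cheung's particular construction rather than from its headline $4/3$-gap guarantee.

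I would then propose the candidate fractional path solution $x'$ defined by $x'_{s_n t_n}=0$ and $x'_e=x^*_e$ on every other edge. Feasibility of $x'$ for $SP(K_n,s_n,t_n)$ is a routine check: degrees drop from $2$ to $1$ exactly at $s_n$ and $t_n$; cuts $\delta(S)$ with $|S\cap\{s_n,t_n\}|\ne 1$ do not contain the edge $(s_n,t_n)$, so $x'(\delta(S))=x^*(\delta(S))\ge 2$; and cuts with $|S\cap\{s_n,t_n\}|=1$ do contain it, so $x'(\delta(S))=x^*(\delta(S))-1\ge 1$.

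The core technical step is showing $x'\in N_+^{r_n}(SP(K_n,s_n,t_n))$. I would prove by induction on $r$ the stronger statement that whenever $z\in N_+^r(ST(K_n))$ has $z_{s_n t_n}=1$, the point $z-e_{s_n t_n}$ lies in $N_+^r(SP(K_n,s_n,t_n))$; taking $z=x^*$ and $r=r_n$ then gives what we need. For the inductive step, the assumption $z_{s_n t_n}=1$ combined with the cone conditions on a round-$r$ protection matrix $X$ for $z$ forces the $(s_n t_n)$-indexed row of $X$ to coincide with the top row $X_0$, and forces every cone decomposition $X_i=\lambda_i(1,y^{(i)})$ and $X_0-X_i=(1-\lambda_i)(1,w^{(i)})$ to satisfy $y^{(i)}_{s_n t_n}=w^{(i)}_{s_n t_n}=1$. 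Construct $X'$ by zeroing out the $(s_n t_n)$-row and column of $X$; then $X'$ is a principal submatrix of $X$ padded with zeros, so it remains symmetric and positive semidefinite, and its diagonal equals $(1~~x')$. The row-cone conditions for $X'$ with respect to $SP(K_n,s_n,t_n)$ then reduce to applying the inductive hypothesis to $y^{(i)}$ and $w^{(i)}$ in place of $z$.

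The gap is a direct computation: $d\cdot x'=L_n-d(s_n,t_n)=(2/3-o(1))L_n$, while any integer hamiltonian path from $s_n$ to $t_n$, completed with the single edge $(t_n,s_n)$, becomes a hamiltonian cycle, so the integer path value is at least the integer tour value minus $d(s_n,t_n)$, hence at least $(4/3-o(1))L_n-(1/3+o(1))L_n=(1-o(1))L_n$. The resulting ratio is $3/2-o(1)$. The ``moreover'' clause comes for free because Cheung's instances are already shortest-path metrics on unweighted undirected graphs; the reduction only labels two specific vertices as $s$ and $t$. The main obstacle is the third paragraph: producing a pair $s_n,t_n$ for which $x^*_{s_n t_n}=1$ and $d(s_n,t_n)\approx L_n/3$ hold simultaneously, and then threading the induction through all $r_n$ rounds of $N_+$ using precisely this pair, is what pins the proof to the particulars of Cheung's construction rather than to any abstract $4/3$-gap tour instance.
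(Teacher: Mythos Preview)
Your inductive lemma---that fixing a coordinate to $1$ in an $N_+^r$ point and then deleting it passes from $ST$ to $SP$---is exactly the mechanism the paper uses (its Lemma~\ref{tsp:lem:pathtour}), and your protection-matrix manipulation is correct. The genuine gap is in the first step: the pair $(s_n,t_n)$ you need does not exist in Cheung's construction.

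In Cheung's family (three long parallel paths of length $\ell$ joining two cliques), the witness $x^*$ takes value $1$ precisely on the path edges, and every such edge joins two vertices that are adjacent in the unweighted graph, so its metric length is $1$. There is no single edge with $x^*_e=1$ and $d_e\approx L_n/3\approx\ell$. If you pick any value-$1$ edge as $(s_n,t_n)$, your arithmetic gives fractional path value $\approx 3\ell-1$ and integer path value $\ge 4\ell-1$, i.e.\ a ratio of only $4/3-o(1)$, not $3/2$.

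The paper's remedy is to delete not a single edge but an entire path of $\ell$ value-$1$ edges: it views Cheung's three-path graph $G'_{\ell,r}$ as the two-path graph $G_{\ell,r}$ plus one ``new'' $s$--$t$ path, and applies the inductive lemma to \emph{all} new edges at once (this is why Lemma~\ref{tsp:lem:pathtour} is stated for an arbitrary set of value-$1$ edges). The fractional value then drops by $\ell$ to $2\ell+O(r)$, while the integer path value is bounded below directly (not via ``tour minus an edge'') by $3\ell-2$, yielding the $3/2$ ratio. So your induction is the right engine, but it has to be driven over a path of $\ell$ coordinates rather than a single one.
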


We prove Theorem~\ref{tsp:thm:sym} in Section~\ref{tsp:sec:sym}.

%%%%%%%%%%%%%%%%%%%%%%%%%%%%%%%%%%%%%%%%%%%%%%%%%%%%%%%%%%%%%%%%%%%%%%%%%%%%%%
%%%%%%%%%%%%%%%%%%%%%%%%%%%%%%%%%%%%%%%%%%%%%%%%%%%%%%%%%%%%%%%%%%%%%%%%%%%%%%

\section{Asymmetric TSP Tour Problem}
\label{tsp:sec:asym}

In this section we prove Theorem~\ref{tsp:thm:asym}. For every $\epsilon>0$ we need to construct an instance $d=(d_e)_{e\in E}$ on $K_n=(V,E)$ such that
\begin{equation}
\label{tsp:eq:asymgoal}
\frac{\min_{x\in AT^\text{int}(K_n)}d\cdot x}{\min_{x\in N(AT_\text{bal}(K_n))}d\cdot x}~\ge~3/2-\epsilon.
\end{equation}
The instances we use are the same ones constructed by Charikar, Goemans, and Karloff \cite{CGK}. For integers $k\ge 1$ and $r\ge 2$ they construct a directed graph $G_{k,r}$ with nonnegative edge weights as follows (see the illustrations in Figures~\ref{tsp:fig:cgk1}, \ref{tsp:fig:cgk2}, and~\ref{tsp:fig:cgk3}). The graph $G_{1,r}$ consists of two paths of $r+1$ edges on the same $r+2$ nodes, going in opposite directions, and all edges have weight $1$. One of the two endpoints is designated as the source $s$ and the other as the sink $t$. For $k>1$ the graph $G_{k,r}$ consists of $r$ copies of $G_{k-1,r}$ and an additional source node $s$ and sink node $t$, with a path from $s$ to $t$ of $r+1$ edges visiting the sources of the copies in some order, and another path from $t$ to $s$ of $r+1$ edges visiting the sinks of the copies in the opposite order. All the new edges have weight $r^{k-1}$.

Next, as in \cite{CGK} we define a directed graph $L_{k,r}=(V_{k,r},E_{k,r})$ with nonnegative edge weights as follows. Suppose $(s,v_1),(v_2,t),(t,v_3),(v_4,s)$ are the edges incident to $s$ and $t$ in $G_{k,r}$. Then $L_{k,r}$ is defined by taking $G_{k,r}$, removing $s$ and $t$, and including edges $(v_2,v_1)$ and $(v_4,v_3)$, both of weight $r^{k-1}$.

Now fix some $k\ge 1$ and $r\ge 2$ and let $n=|V_{k,r}|=\Theta(r^k)$. We define the edge distances $d=(d_e)_{e\in E}$ on the complete directed graph $K_n=(V_{k,r},E)$ as the shortest path distances in the weighted graph $L_{k,r}$. Note that $d$ is nonnegative and satisfies the triangle inequality, so this is a valid instance.

\begin{lemma}[\cite{CGK}]
\label{tsp:lem:asymint}
For $k\ge 2$ and $r\ge 3$ we have $\min_{x\in AT^\text{int}(K_n)}d\cdot x\ge(2k-1)(r-1)r^{k-1}$.
\end{lemma}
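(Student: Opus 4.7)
The plan is to induct on $k$, with base case $k=2$. First I would reduce to lower-bounding the cost of closed walks in $L_{k,r}$: since $d$ is the shortest-path metric on $L_{k,r}$, any integer TSP tour in $K_n$ of cost $C$ corresponds, after replacing each $K_n$-edge by a shortest $L_{k,r}$-path, to a closed walk in $L_{k,r}$ of the same cost $C$ that visits every node at least once. So it suffices to show every such closed walk $W$ in $L_{k,r}$ has cost at least $(2k-1)(r-1)r^{k-1}$.

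To exploit the recursive structure of $L_{k,r}$, I would view it as $r$ copies $C_1,\ldots,C_r$ of $G_{k-1,r}$ joined by two level-$k$ directed $r$-cycles (the source cycle and the sink cycle, each edge of weight $r^{k-1}$) on the sources and sinks of the copies. Since all level-$k$ edges touch only the source $s_i$ and sink $t_i$ of each copy, every maximal subwalk of $W$ confined to a single $C_i$ enters and exits at $\{s_i, t_i\}$, yielding a classification of ``visits'' into types SS, ST, TS, TT with counts $\alpha_i,\beta_i,\gamma_i,\delta_i$. Flow conservation at each $s_i, t_i$ restricted to level-$k$ edges expresses the total source-cycle and sink-cycle edge usages as linear functions of these visit counts, so the top-level cost of $W$ is a linear function of the $(\alpha_i, \beta_i, \gamma_i, \delta_i)$.

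For the internal cost, I would note that each ST or TS visit to $C_i$ must include a directed $s_i$-to-$t_i$ (resp.\ $t_i$-to-$s_i$) path in $G_{k-1,r}$, which uses at least $r+1$ top-level edges of $C_i$ and hence costs at least $(r+1)r^{k-2}$. Moreover, across all visits, $W$ must cover every node of $C_i$, which forces further traversals into the sub-copies at levels $\le k-2$. To harness the inductive hypothesis cleanly, I would strengthen the statement: for each $j$, prove that the minimum cost of a collection of directed walks in $G_{j,r}$ covering every node, with a given multiset of $\{s,t\}$-endpoints, is at least a quantity that reduces to $(2j-1)(r-1)r^{j-1}$ when the endpoints close up to form a single closed walk. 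The arithmetic identity $(2k-1)(r-1)r^{k-1} = r\cdot(2k-3)(r-1)r^{k-2} + 2(r-1)r^{k-1}$ then suggests the inductive split: each copy contributes $(2k-3)(r-1)r^{k-2}$ by the IH and the two level-$k$ cycles together contribute at least $2(r-1)$ edge usages.

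The principal obstacle I anticipate is formulating the strengthened inductive invariant precisely. The walk's restriction to a copy is not a closed tour but a family of segments with constrained endpoints, and the lower bound on its internal cost must be simultaneously sensitive to the mix of visit types (SS/TT closed sub-walks behave differently from ST/TS traversals) and tight enough for the induction to close. Once such an invariant is in place, what remains is a combinatorial optimization that balances top-level edge costs against the internal costs forced by each copy's visit pattern, subject to the global constraints that every copy is visited at least once and that the multiset of level-$k$ edges used forms a connected, balanced digraph on the $r$ super-nodes.
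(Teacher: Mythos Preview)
The paper does not supply a proof of this lemma; it is quoted verbatim from Charikar, Goemans, and Karloff \cite{CGK} and used as a black box. So there is nothing in the present paper to compare your proposal against.

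That said, your outline is broadly in the spirit of the argument in \cite{CGK}: reduce an optimal integral tour to a closed spanning walk in $L_{k,r}$, exploit the recursive decomposition into $r$ copies of $G_{k-1,r}$ linked by two level-$k$ cycles, and induct on $k$ with the split $(2k-1)(r-1)r^{k-1}=r\cdot(2k-3)(r-1)r^{k-2}+2(r-1)r^{k-1}$. The place where your sketch is still genuinely incomplete---and you flag it yourself---is the strengthened invariant for a copy $G_{j,r}$ when the induced subwalks are not a single closed tour but a family of $s$/$t$-segments. In \cite{CGK} this is handled by proving a lower bound on the cost of any Eulerian multigraph on $G_{k,r}$ (equivalently, on connected balanced multisubsets of edges) that is stated directly in terms of how many level-$k$ edges are used and how many copies are ``traversed'' versus merely ``touched''; the induction then tracks a single scalar parameter rather than the full $(\alpha_i,\beta_i,\gamma_i,\delta_i)$ profile you propose. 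Your visit-type bookkeeping is finer than necessary and would make the closing arithmetic messier, but it is not wrong in principle; the real work, as you note, is pinning down an invariant that is both strong enough to recurse and loose enough to be provable at each level.
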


\begin{lemma}
\label{tsp:lem:asymfrac}
For $k\ge 1$ and $r\ge 2$ we have $\min_{x\in N(AT_\text{bal}(K_n))}d\cdot x\le\frac{4}{3}k(r+1)r^{k-1}$.
\end{lemma}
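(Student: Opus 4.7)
The plan is to exhibit a specific fractional solution $x^*\in AT_{\text{bal}}(K_n)$ of cost at most $\tfrac{4}{3}k(r+1)r^{k-1}$ and then certify that $x^*\in N(AT_{\text{bal}}(K_n))$ by constructing an explicit protection matrix. The natural candidate for $x^*$ puts weight $2/3$ on every edge of $L_{k,r}$ and weight $0$ on every other edge of $K_n$. Each vertex of $L_{k,r}$ has exactly two in-edges and two out-edges, so the balance constraint $x^*(\delta^+(v))=x^*(\delta^-(v))=4/3$ holds; the cut constraints $x^*(\delta^\pm(S))\ge 1$ follow because every proper nonempty $S\subsetneq V_{k,r}$ has at least two edges leaving and at least two entering it in $L_{k,r}$, a direct consequence of the nested construction (this is exactly what is needed for the $1/2$-per-edge solution of \cite{CGK} to satisfy the cut constraints of $AT(K_n)$). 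Summing $(2/3)d_e$ over edges of $L_{k,r}$ gives total cost $\tfrac{2}{3}\bigl(2k(r+1)r^{k-1}-2r^{k-1}\bigr)\le\tfrac{4}{3}k(r+1)r^{k-1}$.

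The protection matrix $X$ is then defined by $X_{0,0}=1$, $X_{0,e}=X_{e,0}=X_{e,e}=x^*_e$ for every edge $e$, and $X_e\equiv 0$ whenever $x^*_e=0$. For each $e\in E_{k,r}$ I must produce $y^e,z^e\in AT_{\text{bal}}(K_n)$ with $y^e_e=1$, $z^e_e=0$, and $\tfrac{2}{3}y^e+\tfrac{1}{3}z^e=x^*$, and then set $X_e=\tfrac{2}{3}(1,y^e)$ so that automatically $X_0-X_e=\tfrac{1}{3}(1,z^e)$. The mechanism uses, for each edge $e$, a \emph{frame} $F_e\subseteq E_{k,r}$ containing $e$: set $y^e_{e'}=1$ and $z^e_{e'}=0$ for $e'\in F_e$, and $y^e_{e'}=z^e_{e'}=x^*_{e'}$ for $e'\notin F_e$. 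With this template the convex combination identity is immediate, and the symmetry requirement $X_{e,e'}=X_{e',e}$ collapses precisely to the combinatorial condition $e'\in F_e \iff e\in F_{e'}$.

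Thus the whole problem reduces to designing a symmetric family $\{F_e\}_{e\in E_{k,r}}$ such that both $y^e$ and $z^e$ lie in $AT_{\text{bal}}(K_n)$. Vertex balance demands that $F_e$ itself have equal in- and out-degree at every vertex. The cut conditions are more delicate: since $z^e$ assigns $2/3$ to each edge outside $F_e$ and $0$ to each edge in $F_e$, one needs $|(\delta^+(S)\cap E_{k,r})\setminus F_e|\ge 2$ for every proper nonempty $S$, which forbids $F_e$ from using any edge of a two-edge cut in $L_{k,r}$ and restricts its intersection with every larger cut (the analogous condition for $y^e$ is automatic since every such cut in $L_{k,r}$ already has at least two edges, and adding frame weight can only help). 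I expect the frames to be constructed recursively, mirroring the nested structure of $L_{k,r}$: at each recursive level an edge is paired with its counterpart in the oppositely oriented parallel path, and the frame of an edge is assembled from such mirror pairs drawn from the levels containing the edge. The symmetry condition $e'\in F_e\iff e\in F_{e'}$ is then built in from the outset through the involution that swaps each edge with its level-wise mirror. The main obstacle I anticipate is verifying the cut constraints uniformly across all the nested copies: because $L_{k,r}$ has exponentially many cuts and intricate self-similar structure, this verification must proceed by a careful case analysis on where a cut falls relative to the recursive decomposition, showing that the chosen frames never consume too many edges of any small cut.
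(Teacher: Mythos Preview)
Your high-level strategy matches the paper: take $x^*$ with value $2/3$ on every edge of $L_{k,r}$, and build a protection matrix via a family of ``frames'' satisfying the symmetry condition $e'\in F_e\iff e\in F_{e'}$. But your specific choice of $y^e,z^e$ cannot work, and your own analysis already contains the contradiction. You require $|(\delta^+(S)\cap E_{k,r})\setminus F_e|\ge 2$ for every proper nonempty $S$, i.e.\ $F_e$ must avoid every two-edge cut. However every vertex of $L_{k,r}$ has out-degree exactly $2$, so $\delta^+(\{u\})$ is a two-edge cut for every $u$; in particular $e$ itself lies in the two-edge cut $\delta^+(\{\text{tail}(e)\})$, yet you insist $e\in F_e$. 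Hence no admissible frame exists under your scheme, and $z^e\notin AT_{\text{bal}}$.

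The missing idea is that the frame must \emph{compensate} for the perturbation at $e$, not move in the same direction. In the paper, the frame $F_e$ does \emph{not} contain $e$; it consists of an edge-simple path from $\text{tail}(e)$ to $\text{head}(e)$ together with some edge-disjoint cycles. The two protection vectors are then defined with three (not two) value levels: on the $X_e$ side, $e\mapsto 1$, frame edges $\mapsto 1/2$, all others $\mapsto 3/4$; on the $X_0-X_e$ side, $e\mapsto 0$, frame edges $\mapsto 1$, all others $\mapsto 1/2$. Now when $e$ is set to $0$, the frame path has full capacity $1$ and can reroute any half-unit of flow that would have used $e$; since $L_{k,r}$ has two edge-disjoint paths between any pair of nodes, this is enough to certify every cut. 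Your ``mirror pair'' picture is not the right structure: what is needed is an alternate route from the tail to the head of $e$, and the extra cycles in the frame are there solely to make the symmetry condition $e'\in F_e\iff e\in F_{e'}$ hold (the natural shortest alternate paths alone do not give symmetric membership).
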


\begin{figure}[t]
\centering
\includegraphics[scale=0.9]{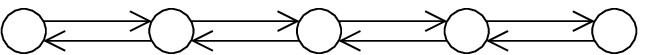}
\caption{The graph $G_{1,3}$.}
\label{tsp:fig:cgk1}
\end{figure}

\begin{figure}[t]
\centering
\includegraphics[scale=0.9]{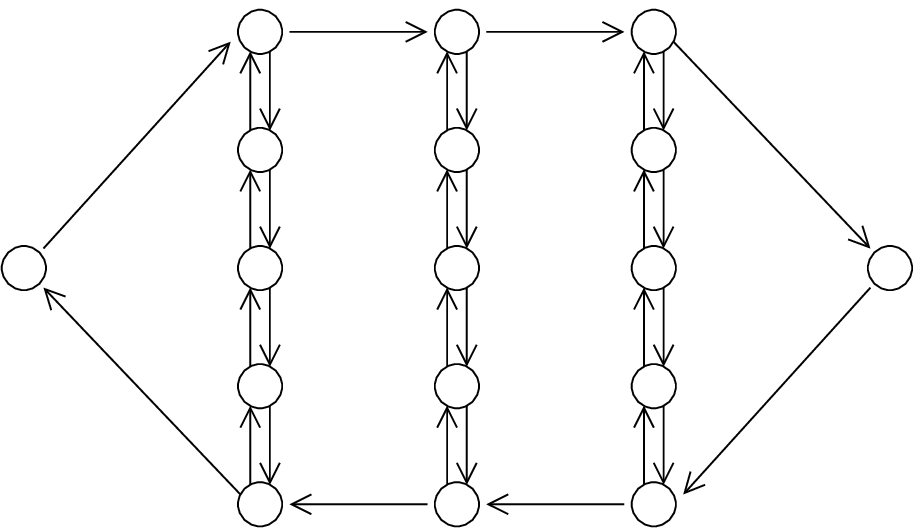}
\caption{The graph $G_{2,3}$.}
\label{tsp:fig:cgk2}
\end{figure}

\begin{figure}[t]
\centering
\includegraphics[scale=0.67]{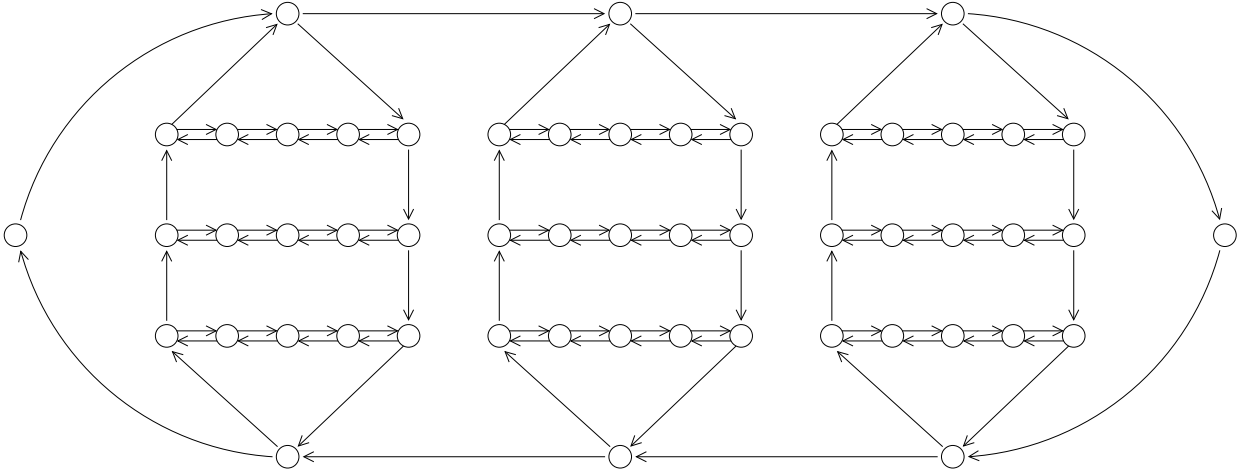}
\caption{The graph $G_{3,3}$.}
\label{tsp:fig:cgk3}
\end{figure}

Letting $k\ge 2$ and $r\ge 3$ and combining Lemma~\ref{tsp:lem:asymint} and Lemma~\ref{tsp:lem:asymfrac}, we find that the ratio on the left side of (\ref{tsp:eq:asymgoal}) is at least $\frac{3}{2}\cdot\frac{(k-1/2)(r-1)}{k(r+1)}$, which is at least $3/2-\epsilon$ provided $k$ and $r$ are large enough. It remains to prove Lemma~\ref{tsp:lem:asymfrac}.

\begin{proof}[Proof of Lemma~\ref{tsp:lem:asymfrac}]
Consider the vector $x=(x_e)_{e\in E_{k,r}}$ with $x_e=2/3$ for all $e\in E_{k,r}$. We claim that $x\in N(AT_\text{bal}(L_{k,r}))$. Then a simple argument shows that extending $x$ with $0$ values for all the edges in $E\backslash E_{k,r}$ yields a point $\hat{x}\in N(AT_\text{bal}(K_n))$ such that $d\cdot\hat{x}$ equals $2/3$ times the sum of all the edge weights in $L_{k,r}$ (using the fact that every edge in the weighted graph $L_{k,r}$ is a shortest path from its tail to its head). An inductive argument shows that the sum of all the edge weights in $L_{k,r}$ is at most $2k(r+1)r^{k-1}$ (see \cite{CGK}), so $\hat{x}$ witnesses Lemma~\ref{tsp:lem:asymfrac}.

All that remains is to prove that $x\in N(AT_\text{bal}(L_{k,r}))$, and this is the meat of the argument. In order to exhibit a protection matrix for $x$, we first need the following definitions. We call a path or cycle in a directed graph \emph{edge-simple} if it does not repeat any edges, but may repeat nodes.

\begin{definition}
\label{tsp:def:frame}
Given a directed graph $G'=(V',E')$ and an edge $(u,v)\in E'$, a \emph{frame} for $(u,v)$ is a set of edges $F\subseteq E'\backslash\{(u,v)\}$ that consists of an edge-simple path from $u$ to $v$ together with zero or more edge-simple cycles. The path and cycles are required to be edge-disjoint from each other.
\end{definition}

\begin{claim}
\label{tsp:clm:frame}
In the graph $L_{k,r}$ there exists a frame $F_e$ for each $e\in E_{k,r}$, with the property that for all $e_1,e_2\in E_{k,r}$ we have $e_2\in F_{e_1}$ if and only if $e_1\in F_{e_2}$.
\end{claim}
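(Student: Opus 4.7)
The plan is to proceed by induction on $k$. For the base case $k=1$, the graph $L_{1,r}$ is the edge-disjoint union of two directed Hamilton cycles $C_1, C_2$ on the same $r$ nodes oriented oppositely, so that for each edge $e = (u, v)$ in one cycle, the reverse edge $(v, u)$ lies in the other. I would set $F_e$ to be the opposite cycle with $(v, u)$ removed; this is an edge-simple $u$-to-$v$ path going the long way around, hence a valid frame with no cycle component. Symmetry reduces to the observation that, for $e_1, e_2$ in different cycles, $e_2 \in F_{e_1}$ iff $e_2$ is not the reverse of $e_1$, which is manifestly symmetric in $e_1$ and $e_2$; and for $e_1, e_2$ in the same cycle, neither is in the other's frame.

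For the inductive step, I would exploit the recursive structure: $L_{k, r}$ consists of $r$ subgraphs $G^{(1)}, \ldots, G^{(r)}$ (each isomorphic to $G_{k-1,r}$) joined by two top-level directed cycles, $\mathcal{S}$ through the internal sources $\{s_j\}$ and $\mathcal{T}$ through the internal sinks $\{t_j\}$. Since $\mathcal{S}$ and $\mathcal{T}$ live on disjoint node sets, the base-case pairing trick does not apply directly at the top level. Instead, the frame of a top-level edge $e = (s_j, s_{j+1}) \in \mathcal{S}$ would weave through the internal copies: its $u$-to-$v$ path would go forward through $G^{(j)}$ to reach $t_j$, follow $\mathcal{T}$ to reach $t_{j+1}$, and then go backward through $G^{(j+1)}$ to arrive at $s_{j+1}$. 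Frames for edges of $\mathcal{T}$ are built symmetrically. Frames for internal edges within a copy $G^{(j)}$ would combine an inductive construction inside the copy with top-level cycles (and possibly other copies' edges) to close any path from tail to head that cannot be completed within $G^{(j)}$ alone.

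The hard part will be verifying the symmetry property uniformly. In the base case it is almost automatic, but in the inductive step a single frame can touch edges at several levels of the recursion and in several different copies, so we need a uniform description. I would aim to define the frames via a rule of the form $F_e = \{e' : e \sim e'\}$ for a symmetric relation $\sim$ on $E(L_{k,r})$ structured by the hierarchical forward-vs-reverse classification of each edge at each recursive depth. The tricky facts to check would then be: (a) for every edge $e = (u, v)$, the set $\{e' : e \sim e'\}$ has out-excess $1$ at $u$, in-excess $1$ at $v$, and is balanced at every other vertex, so it decomposes as an edge-simple $u$-to-$v$ path plus zero or more edge-disjoint edge-simple cycles; and (b) the relation $\sim$ is globally symmetric. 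I would handle both by case analysis on the types of $e_1$ and $e_2$ (top-level vs.\ internal, forward vs.\ reverse, classified per recursive depth), with separate treatment for the top-to-top, top-to-internal, same-copy, and different-copies cases.
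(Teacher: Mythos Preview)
Your base case for $L_{1,r}$ is correct and coincides with the paper's (degenerate) top-level case, and your description of the frame for a top-level edge $(s_j,s_{j+1})$ also agrees with the paper: the path goes through $G^{(j)}$ to $t_j$, around the sink cycle $\mathcal{T}$ to $t_{j+1}$, and back through $G^{(j+1)}$.

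The gap is in the inductive step for internal edges. First, there is a structural mismatch: your induction hypothesis concerns $L_{k-1,r}$, but the pieces of $L_{k,r}$ are copies of $G_{k-1,r}$. For an edge at level $k-1$ these differ exactly where it matters: in $L_{k-1,r}$ the level-$(k-1)$ edges form two cycles, while in $G_{k-1,r}$ they form two paths terminating at the source and sink, so the inductively given frame does not transplant. Second, and more importantly, you describe the extra top-level material as something added ``to close any path from tail to head that cannot be completed within $G^{(j)}$ alone.'' But for an internal edge $e$ the tail-to-head path \emph{can} always be completed inside its own copy; the reason to add anything is not connectivity but \emph{symmetry}. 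The paper's frame for an edge $e$ at level $\ell<k$ is the shortest alternative path (which climbs to level $\ell{+}1$ and passes through one or two copies of $G_{\ell,r}$ inside the ambient $G_{\ell+1,r}$) \emph{together with} one full level-$\ell$ cycle in each of the remaining sibling copies of $G_{\ell,r}$. Without those sibling cycles symmetry fails: if $e_2$ is a level-$\ell$ edge in the adjacent copy lying on the path of $F_{e_1}$, then the path of $F_{e_2}$ visits $e_2$'s copy and the copy on its \emph{other} side, missing $e_1$ entirely; only the added cycle through $e_1$'s copy places $e_1$ in $F_{e_2}$. Your sketch never isolates this mechanism.

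Accordingly, the paper does not argue by induction on $k$. It defines $F_e$ directly from the level $\ell$ of $e$: the frame lives entirely at levels $\ell{-}1,\ell,\ell{+}1$ inside the ambient copy of $G_{\ell+1,r}$ (or inside $L_{k,r}$ when $\ell=k{-}1$), and for $\ell=k$ it is just the path with no cycles. Symmetry is then a finite case check on the mediating/nonmediating and inner/outer type of $e_1$ and the position of $e_2$ within $F_{e_1}$, with no recursive descent. Your proposed symmetric relation $\sim$ ``structured by the hierarchical forward-vs-reverse classification'' is a reasonable target, but as written it is not a definition, and the plan is missing precisely the one idea (the sibling-copy cycles) that makes the construction work.
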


We prove Claim~\ref{tsp:clm:frame} shortly, but let us now see how to use it to construct a protection matrix $X$ for $x$. The rows and columns of $X$ are indexed by $E_{k,r}$, except there is an additional $0$th row and $0$th column. We must have $X_{0,0}=1$ and $X_{0,e}=X_{e,0}=X_{e,e}=2/3$ for all $e\in E_{k,r}$. For $e_1\ne e_2$ we set \[X_{e_1,e_2}~=~\begin{cases}1/3&\text{if $e_2\in F_{e_1}$}\\ 1/2&\text{otherwise}\end{cases}.\] Note that $X$ is symmetric by the property in Claim~\ref{tsp:clm:frame}. Thus we just need to show that $X_e\in\text{cone}(AT_\text{bal}(L_{k,r}))$ and $X_0-X_e\in\text{cone}(AT_\text{bal}(L_{k,r}))$ for all $e\in E_{k,r}$.

We first consider $X_e$. Since $(X_e)_0=2/3$, $X_e\in\text{cone}(AT_\text{bal}(L_{k,r}))$ is equivalent to $x^{(e)}\in AT_\text{bal}(L_{k,r})$ where $x^{(e)}$ is the vector on $E_{k,r}$ that gives value $1$ to $e$, value $1/2$ to the edges in $F_e$, and value $3/4$ to the remaining edges. To verify the balance condition $x^{(e)}(\delta^+(v))=x^{(e)}(\delta^-(v))$ for all $v\in V_{k,r}$, consider starting with the vector that assigns $3/4$ to all edges. Since every node in $L_{k,r}$ has two incoming edges and two outgoing edges, this vector would satisfy the balance condition. Reducing the values on the path and cycles to $1/2$ maintains the balance condition except at the endpoints of $e$, which are remedied by raising the value of $e$ to $1$. To verify the cut constraints, suppose we interpret $x^{(e)}$ as capacities in a flow network. Then we need to show that every cut has capacity at least $1$ in both directions, which is equivalent to showing that a unit of flow can be sent from any node to any other node. It is shown in \cite{CGK} that for every $u,v\in V_{k,r}$ there exist two edge-disjoint paths from $u$ to $v$ in $L_{k,r}$. Since every edge has capacity at least $1/2$ in $x^{(e)}$, we can send a half-unit of flow along each of these two paths. Thus we have shown that $X_e\in\text{cone}(AT_\text{bal}(L_{k,r}))$.

Now we consider $X_0-X_e$. Since $(X_0-X_e)_0=1/3$, $X_0-X_e\in\text{cone}(AT_\text{bal}(L_{k,r}))$ is equivalent to $y^{(e)}\in AT_\text{bal}(L_{k,r})$ where $y^{(e)}$ is the vector on $E_{k,r}$ that gives value $0$ to $e$, value $1$ to the edges in $F_e$, and value $1/2$ to the remaining edges. The balance condition is verified similarly as before: The vector that assigns $1/2$ to all edges would satisfy the balance condition, and raising the values on the path and cycles to $1$ maintains the balance condition except at the endpoints of $e$, which are remedied by reducing the value of $e$ to $0$. To verify the cut constraints, we need to show that a unit of flow can be sent from any node to any other node, respecting the capacities $y^{(e)}$. Every edge has capacity at least $1/2$ \emph{except for $e$ itself}, so if we try to send a half-unit of flow along each of the two edge-disjoint paths from $u$ to $v$, we only run into trouble if one of these paths uses $e$. In that case, the half-unit of flow through $e$ can be re-routed via the path in the frame $F_e$. This path may overlap with the two paths from $u$ to $v$, but since each edge in $F_e$ has capacity $1$, it can accommodate two half-units of flow simultaneously. Thus we can always send a unit of flow from $u$ to $v$, and we have shown that $X_0-X_e\in\text{cone}(AT_\text{bal}(L_{k,r}))$.

\begin{figure}[p]
\centering
\includegraphics[scale=0.8]{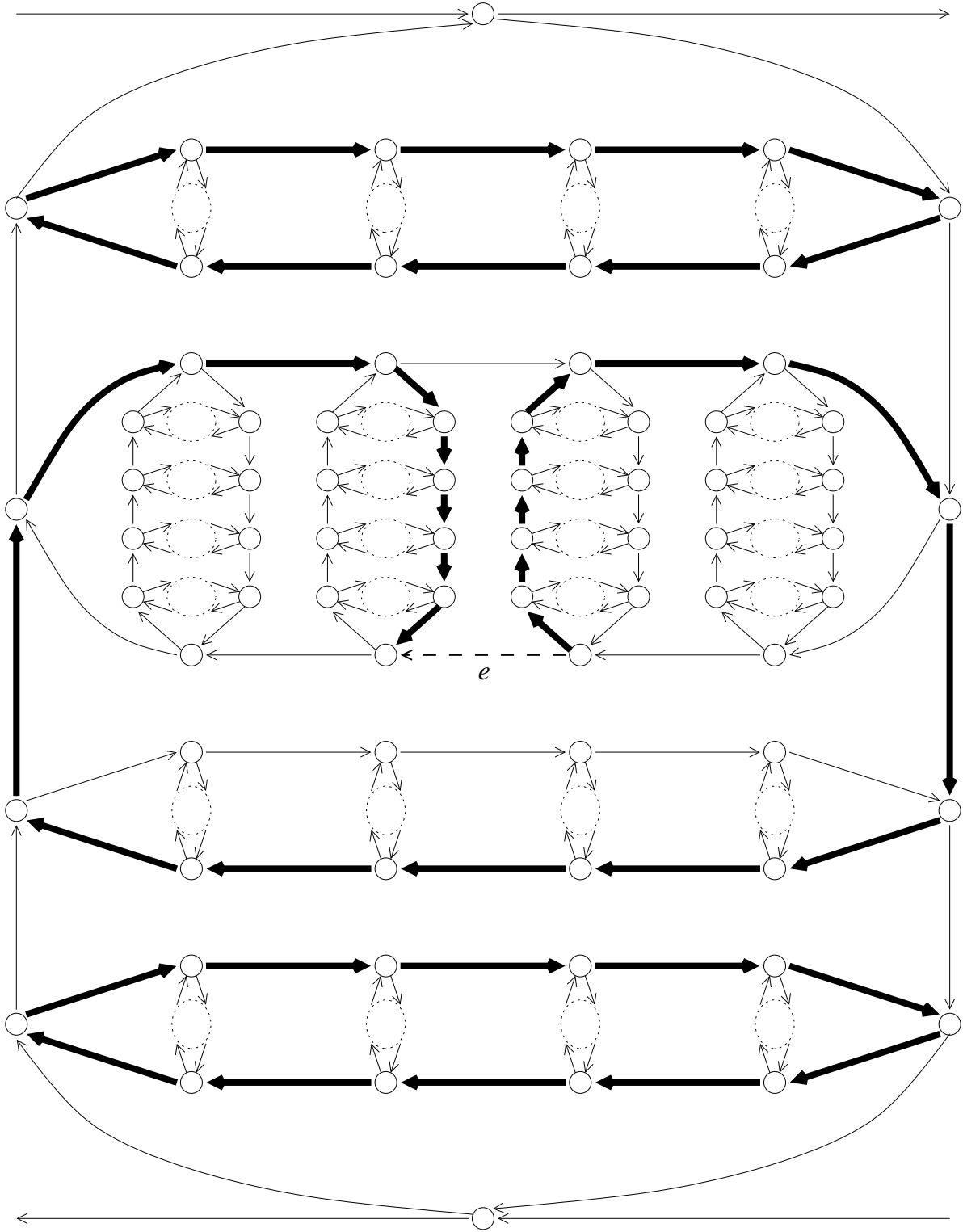}
\caption{Frame for a nonmediating inner edge at level $3\le\ell\le k-2$.}
\label{tsp:fig:frame1}
\end{figure}

\begin{figure}[t]
\centering
\includegraphics[scale=0.67]{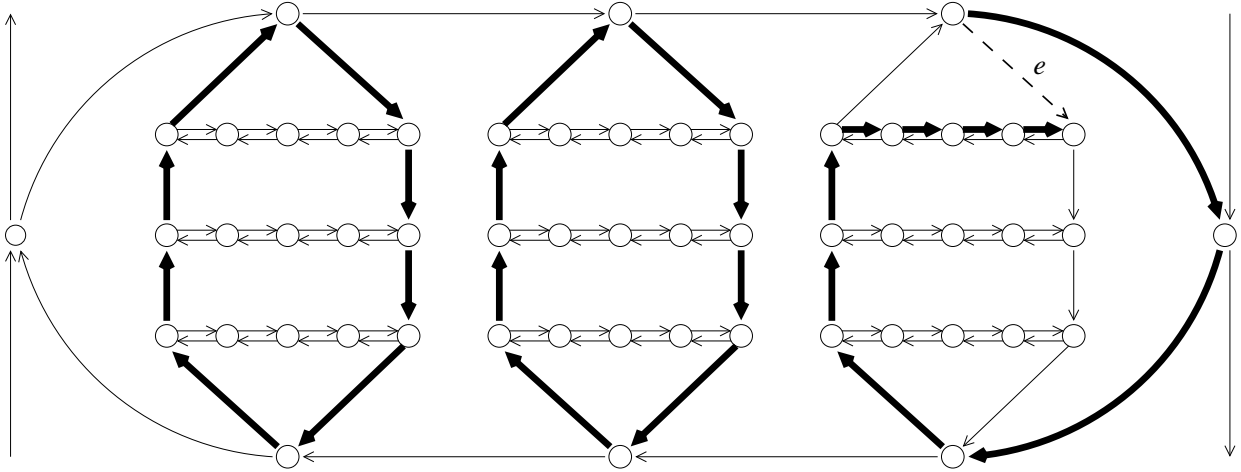}
\caption{Frame for a mediating outer edge at level $\ell=2$.}
\label{tsp:fig:frame2}
\end{figure}

\begin{figure}[t]
\centering
\includegraphics[scale=0.8]{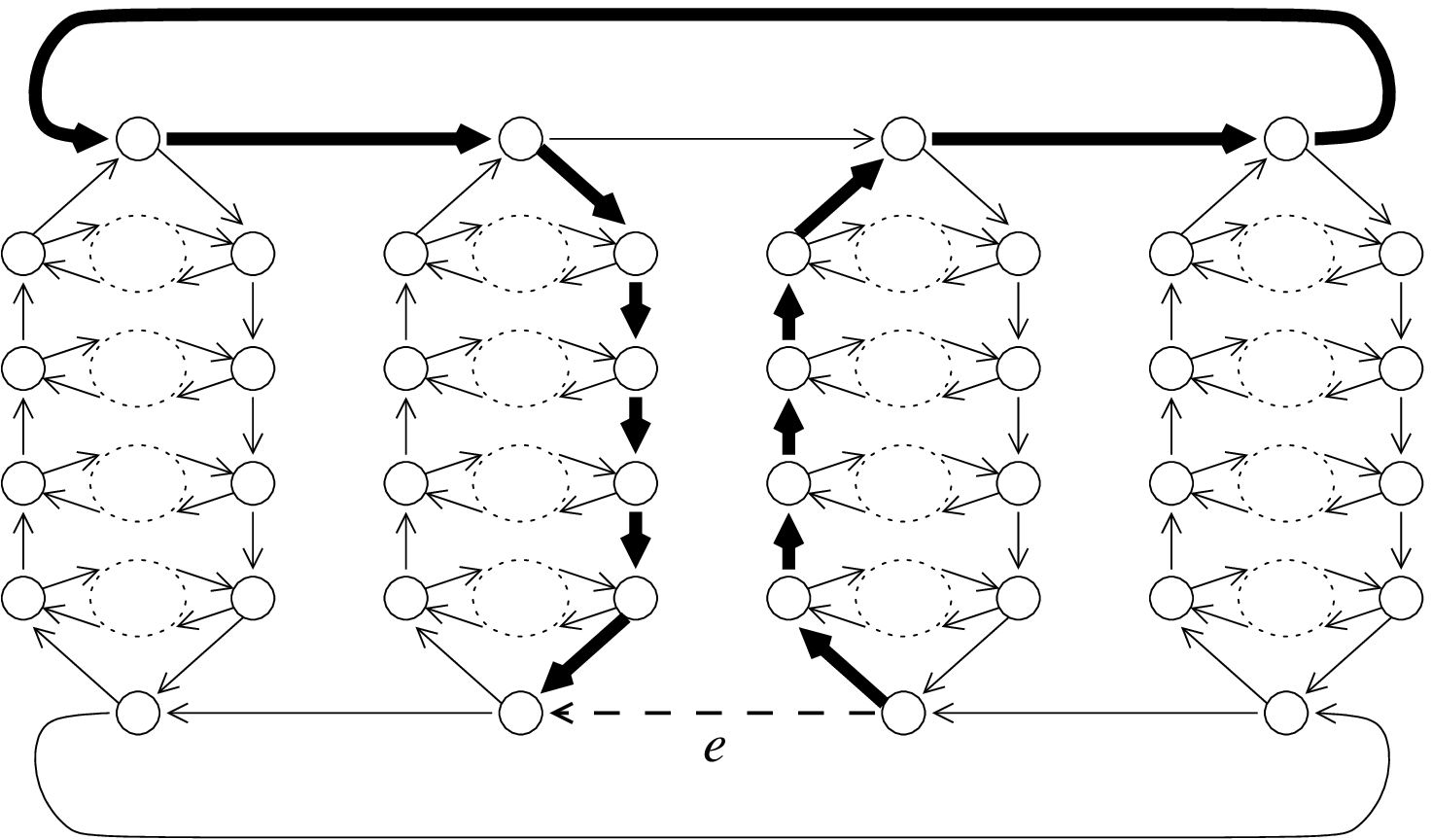}
\caption{Frame for an edge at level $\ell=k$.}
\label{tsp:fig:frame3}
\end{figure}

To finish the proof of Lemma~\ref{tsp:lem:asymfrac}, we just need to prove Claim~\ref{tsp:clm:frame}. For each edge $e\in E_{k,r}$ there is a unique shortest path (in terms of number of edges) from $e$'s tail to $e$'s head that does not use $e$. The natural first attempt at defining the frames is to take $F_e$ to be this path for each $e$. However, this does not work in general; it is crucial that we augment the path with some cycles.

Each edge $e\in E_{k,r}$ can be assigned a level $\ell\in\{1,\ldots,k\}$, meaning that $e$ is introduced as part of a copy of $G_{\ell,r}$ in the construction of $G_{k,r}$ (and the two new edges in $L_{k,r}$ are at level $k$). The most important distinction is whether $\ell<k$ or $\ell=k$. If $\ell<k$ then we include in $F_e$ the shortest path from $e$'s tail to $e$'s head that does not use $e$, and for every copy of $G_{\ell,r}$ that is contained in the ambient copy of $G_{\ell+1,r}$ (or in $L_{k,r}$ if $\ell=k-1$) and that is not touched by this path, we include in $F_e$ a cycle containing all the edges at level $\ell$ in this copy of $G_{\ell,r}$. If $\ell=k$ then we only include in $F_e$ the shortest path from $e$'s tail to $e$'s head that does not use $e$, and we include no cycles. This completes the description of the frames.

Let us be a bit more precise and give some illustrations. From now on, we use ``terminal'' to mean either source or sink, since there is no need to distinguish between the latter. We say an edge at level $\ell<k$ is ``mediating'' if it is incident to a terminal when it is introduced in $G_{\ell,r}$, and ``nonmediating'' otherwise. Also, we say an edge at level $\ell<k-1$ is ``outer'' if when we start at its head and follow the direct route to a terminal of the ambient copy of $G_{\ell,r}$, the other edge entering that terminal is a mediating edge. The edge is an ``inner'' edge otherwise, and all edges at level $k-1$ are considered inner.

Consider an edge $e=(u,v)$ at level $\ell<k$. There are four cases depending on whether $e$ is mediating/nonmediating and outer/inner. The ``typical'' case is when $e$ is nonmediating and inner and at level $3\le\ell\le k-2$. This is illustrated in Figure~\ref{tsp:fig:frame1}, with bold frame edges. Then $u$ is a terminal of a copy of $G_{\ell-1,r}$, and the path starts by taking the direct route to the other terminal of this copy, then the direct route to a terminal of the ambient copy of $G_{\ell,r}$, then one step to the next copy of $G_{\ell,r}$, then the direct route to the other terminal of that copy, then one step back to the copy of $G_{\ell,r}$ containing $e$, then the direct route to the other terminal of the copy of $G_{\ell-1,r}$ of which $v$ is a terminal, and finally the direct route to $v$ through this copy of $G_{\ell-1,r}$. This path involves two copies of $G_{\ell,r}$. For each of the other $r-2$ copies of $G_{\ell,r}$ in the ambient copy of $G_{\ell+1,r}$, we include a cycle containing all the edges at level $\ell$ in this copy of $G_{\ell,r}$. This completes the description of the frame $F_e$. Nothing changes if $\ell=2$ or $\ell=k-1$. When $\ell=1$, there are no copies of ``$G_{\ell-1,r}$'' to go through, so the path simply starts by going directly to a terminal of the copy of $G_{1,r}$ containing $e$, and ends by going from the other terminal directly to $v$.

When $e$ is mediating, all that changes is that one of the two copies of $G_{\ell-1,r}$ disappears; either $u$ or $v$ is already a terminal of the ambient copy of $G_{\ell,r}$. When $e$ is outer, there is no ``next'' copy of $G_{\ell,r}$; the first step along an edge at level $\ell+1$ takes us to a terminal of the ambient copy of $G_{\ell+1,r}$, then the next step takes us back to the copy of $G_{\ell,r}$ containing $e$. Furthermore when $e$ is outer, $r-1$ cycles need to be included in the frame (rather than $r-2$) since the path only touches one copy of $G_{\ell,r}$. Figure~\ref{tsp:fig:frame2} illustrates the frame for a mediating outer edge at level $\ell=2$.

Figure~\ref{tsp:fig:frame3} illustrates the frame for an edge $e=(u,v)$ at level $\ell=k$. The path takes the direct route from $u$ to the other terminal of the copy of $G_{k-1,r}$ containing $u$, then follows the level-$k$ cycle containing this terminal until it gets to the copy of $G_{k-1,r}$ containing $v$, then follows the direct route to $v$ through this copy. This completes the more precise description of all the frames.

It remains to be verified that for all $e_1,e_2\in E_{k,r}$ we have $e_2\in F_{e_1}$ if and only if $e_1\in F_{e_2}$. Of course, we just need to consider an arbitrary $e_1$ and show that for each $e_2\in F_{e_1}$ we have $e_1\in F_{e_2}$. This is a slightly tedious case analysis. Rather than give full details (which would be pedantic) we give a couple illustrative cases. Suppose $e_1$ is the edge $e$ in Figure~\ref{tsp:fig:frame1}. If $e_2$ is the third edge on the path in $F_{e_1}$, then the path in $F_{e_2}$ first goes right to the other terminal in the copy of $G_{\ell-2,r}$, then goes down to $e_1$'s tail, then traverses $e_1$, then goes up through the copy of $G_{\ell-1,r}$ containing $e_1$'s head, and so on. If $e_2$ is the edge that appears right below $e_1$ on the path in $F_{e_1}$, then the path in $F_{e_2}$ only touches the bottom two copies of $G_{\ell,r}$ in the figure, and $e_1$ is contained in a cycle in $F_{e_2}$ (the one for the copy of $G_{\ell,r}$ containing $e_1$). This demonstrates the importance of including cycles in the frames. The other cases can be checked similarly.

This finishes the proof of Claim~\ref{tsp:clm:frame} and the proof of Lemma~\ref{tsp:lem:asymfrac}.
\end{proof}

%%%%%%%%%%%%%%%%%%%%%%%%%%%%%%%%%%%%%%%%%%%%%%%%%%%%%%%%%%%%%%%%%%%%%%%%%%%%%%
%%%%%%%%%%%%%%%%%%%%%%%%%%%%%%%%%%%%%%%%%%%%%%%%%%%%%%%%%%%%%%%%%%%%%%%%%%%%%%

\section{Symmetric TSP Path Problem}
\label{tsp:sec:sym}

In this section we prove Theorem~\ref{tsp:thm:sym}. For every $\epsilon>0$ we need to construct an instance $d=(d_e)_{e\in E}$ on $K_n=(V,E)$ with $s,t\in V$ such that for some $r=\gamma n$ with $\gamma>0$ depending on $\epsilon$,

\newpage

\begin{equation}
\label{tsp:eq:symgoal}
\frac{\min_{x\in SP^\text{int}(K_n,s,t)}d\cdot x}{\min_{x\in N_+^r(SP(K_n,s,t))}d\cdot x}~\ge~3/2-\epsilon.
\end{equation}
For integers $\ell\ge 1$ and $q\ge 0$ we construct an undirected graph $G_{\ell,q}=(V_{\ell,q},E_{\ell,q})$ as follows (see the illustration in Figure~\ref{tsp:fig:sympath}). The graph $G_{\ell,q}$ has two horizontal node-disjoint paths each with $\ell$ edges. It also has two cliques each with $3q+3$ nodes, a left one and a right one. The cliques are node-disjoint from each other and from the paths. The two left endpoints of the paths each have edges to all the nodes in the left clique, and the two right endpoints of the paths each have edges to all the nodes in the right clique. Finally, there is an additional node $s$ with edges to all the nodes in the left clique, and an additional node $t$ with edges to all the nodes in the right clique.

Also consider the graph $G'_{\ell,q}=(V'_{\ell,q},E'_{\ell,q})$ (illustrated in Figure~\ref{tsp:fig:symtour}) which is the same as $G_{\ell,q}$ except that $s$ and $t$ are connected by a path with $\ell$ new edges (using $\ell-1$ new nodes). We have $V_{\ell,q}\subseteq V'_{\ell,q}$ and $E_{\ell,q}\subseteq E'_{\ell,q}$.

\begin{lemma}
\label{tsp:lem:pathtour}
For all $r\ge 0$, $\ell\ge 1$, and $q\ge 0$ the following holds. For all $x=(x_e)_{e\in E_{\ell,q}}$, if $x'=(x'_e)_{e\in E'_{\ell,q}}$ is the same as $x$ but with all new edges $e$ in $E'_{\ell,q}$ having $x'_e=1$, then $x'\in N_+^r(ST(G'_{\ell,q}))$ implies $x\in N_+^r(SP(G_{\ell,q},s,t))$.
\end{lemma}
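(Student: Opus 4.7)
My plan is to induct on $r$. For the base case $r=0$, since the new $s$-$t$ path has length $\ell$ with $\ell-1$ internal nodes of degree $2$ and all $\ell$ new edges are set to $1$, the degree constraints in $SP(G_{\ell,q},s,t)$ follow immediately from those in $ST(G'_{\ell,q})$ after subtracting the unit contribution of the two new edges incident to $s$ and $t$. For each cut constraint on $\emptyset\subsetneq S\subsetneq V_{\ell,q}$, I would lift $S$ to $S'\subseteq V'_{\ell,q}$ by including all $\ell-1$ internal nodes of the new path in $S'$ exactly when $S$ contains at least one of $s,t$. A short case analysis then shows that the new path contributes $0$ edges to $\delta_{G'}(S')$ when $|S\cap\{s,t\}|\in\{0,2\}$ and exactly $1$ edge when $|S\cap\{s,t\}|=1$, producing precisely the bounds $\ge 2$ and $\ge 1$ demanded by $SP(G_{\ell,q},s,t)$.

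For the inductive step with $r\ge 1$, let $X'$ be a PSD protection matrix witnessing $x'\in N_+^r(ST(G'_{\ell,q}))$, and let $X$ be the principal submatrix of $X'$ indexed by $\{0\}\cup E_{\ell,q}$. Symmetry, positive semidefiniteness, and the normalization $X_0=\operatorname{diag}(X)=(1,x)$ all descend from $X'$, so the remaining task is to show that $X_e,X_0-X_e\in\operatorname{cone}(N_+^{r-1}(SP(G_{\ell,q},s,t)))$ for each $e\in E_{\ell,q}$. The observation that unlocks this is that for every new edge $e_{\text{new}}$, the vector $X'_0-X'_{e_{\text{new}}}$ has $0$-coordinate $1-x'_{e_{\text{new}}}=0$ and lies in $\operatorname{cone}(N_+^{r-1}(ST(G'_{\ell,q})))$, so it must be the zero vector; hence $X'_{e_{\text{new}}}=X'_0$ as rows. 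Combined with the symmetry of $X'$, this forces $X'_{e,e_{\text{new}}}=x_e$ for every $e\in E_{\ell,q}$ and every new edge $e_{\text{new}}$.

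Consequently, writing $X'_e=x_e\cdot(1,y')$ with $y'\in N_+^{r-1}(ST(G'_{\ell,q}))$ (the corner cases $x_e\in\{0,1\}$ make $X_e$ or $X_0-X_e$ trivially zero), the extracted slice $y'$ satisfies $y'_{e_{\text{new}}}=1$ on all new edges. The inductive hypothesis applied to $y'$ then delivers $y:=y'|_{E_{\ell,q}}\in N_+^{r-1}(SP(G_{\ell,q},s,t))$, and so $X_e=x_e\cdot(1,y)$ lies in the required cone. The complementary check for $X_0-X_e$ is identical: its $e_{\text{new}}$-coordinate equals $1-x_e$, matching its $0$-coordinate, so the slice obtained from $X'_0-X'_e$ again takes value $1$ on every new edge and the inductive hypothesis finishes the job.

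The main obstacle I expect is not any single hard calculation but rather correctly setting up the two ingredients --- the cut lift $S\mapsto S'$ in the base case and the identity $X'_{e_{\text{new}}}=X'_0$ in the inductive step --- both of which rely essentially on the rigidity of the new $s$-$t$ path in $G'_{\ell,q}$: its internal nodes have degree $2$, so its edges are forced to value $1$ throughout the whole hierarchy. Once that rigidity is exploited, the rest of the argument is bookkeeping.
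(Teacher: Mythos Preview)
Your proposal is correct and follows essentially the same approach as the paper: induction on $r$, with the base case handled by a direct comparison of degree and cut constraints, and the inductive step handled by restricting the protection matrix $X'$ to the principal submatrix indexed by $\{0\}\cup E_{\ell,q}$ and using the fact that entries indexed by new edges are forced. The only cosmetic differences are that in the base case the paper works with $S\subseteq V_{\ell,q}$ directly (reducing $|S\cap\{s,t\}|=2$ to $0$ by complementation) rather than lifting to $S'$, and in the inductive step the paper invokes the identity $X'_{e^*,e_{\text{new}}}=X'_{e^*,0}$ as a ``basic property of protection matrices'' whereas you spell it out via $X'_0-X'_{e_{\text{new}}}=0$.
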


\begin{proof}
We prove the lemma by induction on $r$. Suppose $r=0$. Trivially, $x'(\delta_{G'_{\ell,q}}(v))=2$ for all $v\in V_{\ell,q}$ implies $x(\delta_{G_{\ell,q}}(v))=2$ for all $v\in V_{\ell,q}\backslash\{s,t\}$ and $x(\delta_{G_{\ell,q}}(v))=1$ for all $v\in\{s,t\}$. Now consider an arbitrary set $\emptyset\subsetneq S\subsetneq V_{\ell,q}$ with $\big|S\cap\{s,t\}\big|\ne 1$, and assume without loss of generality that $\big|S\cap\{s,t\}\big|=0$. Then $x'(\delta_{G'_{\ell,q}}(S))\ge 2$ implies $x(\delta_{G_{\ell,q}}(S))\ge 2$ since none of the new edges is in $\delta_{G'_{\ell,q}}(S)$. Now consider an arbitrary set $\emptyset\subsetneq S\subsetneq V_{\ell,q}$ with $\big|S\cap\{s,t\}\big|=1$. Then $x'(\delta_{G'_{\ell,q}}(S))\ge 2$ implies $x(\delta_{G_{\ell,q}}(S))\ge 1$ since only one new edge is in $\delta_{G'_{\ell,q}}(S)$. Thus we have shown that $x'\in ST(G'_{\ell,q})$ implies $x\in SP(G_{\ell,q},s,t)$.

Now suppose $r>0$ and the lemma holds for $r-1$. Assume $x'\in N_+(N_+^{r-1}(ST(G'_{\ell,q})))$ and thus it has a protection matrix $X'$ (which has a $0$th row and a $0$th column, and the remaining rows and columns are indexed by $E'_{\ell,q}$). Obtain a matrix $X$ by deleting the rows and columns corresponding to the new edges in $E'_{\ell,q}$. We claim that $X$ is a protection matrix witnessing that $x\in N_+(N_+^{r-1}(SP(G_{\ell,q},s,t)))$. Note that $X$ is symmetric and positive semidefinite since it is a principal submatrix of the symmetric positive semidefinite matrix $X'$. We also have $X_0=\text{diag}(X)=(1~~x)$. We just need to verify that $X_{e^*}\in\text{cone}(N_+^{r-1}(SP(G_{\ell,q},s,t)))$ and $X_0-X_{e^*}\in\text{cone}(N_+^{r-1}(SP(G_{\ell,q},s,t)))$ for each $e^*\in E_{\ell,q}$. We just consider $X_{e^*}$; the case of $X_0-X_{e^*}$ is similar. We know that $X'_{e^*}\in\text{cone}(N_+^{r-1}(ST(G'_{\ell,q})))$. Thus if $X'_{e^*,0}=0$ then $X'_{e^*}$ is all $0$'s, so $X_{e^*}$ is all $0$'s and we have $X_{e^*}\in\text{cone}(N_+^{r-1}(SP(G_{\ell,q},s,t)))$. Otherwise $X'_{e^*,0}>0$ and we have $x'^{(e^*)}\in N_+^{r-1}(ST(G'_{\ell,q}))$ where $x'^{(e^*)}$ is $X'_{e^*}/X'_{e^*,0}$ with the $0$th entry omitted. For each new edge $e\in E'_{\ell,q}$, since $X'_{0,e}=x'_e=1$ we must have $X'_{e^*,e}=X'_{e^*,0}$, by a basic property of protection matrices. Thus $x'^{(e^*)}_e=1$ for each new edge $e$, and the induction hypothesis tells us that $x^{(e^*)}\in N_+^{r-1}(SP(G_{\ell,q},s,t))$ where $x^{(e^*)}$ is $x'^{(e^*)}$ with the entries for the new edges omitted, or in other words, $x^{(e^*)}$ is $X_{e^*}/X_{e^*,0}$ with the $0$th entry omitted. Thus we have $X_{e^*}\in\text{cone}(N_+^{r-1}(SP(G_{\ell,q},s,t)))$ as desired.
\end{proof}

Now fix some integers $r\ge 0$ and $\ell\ge 1$ and let $q=r$ and $n=|V_{\ell,r}|=2(\ell+1)+2(3r+3)+2$. We define the edge distances $d=(d_e)_{e\in E}$ on the complete undirected graph $K_n=(V_{\ell,r},E)$ as the shortest path distances in the unweighted graph $G_{\ell,r}$. Note that $d$ is nonnegative and satisfies the triangle inequality, so this is a valid instance (together with the distinguished nodes $s,t$).

\begin{figure}[t]
\centering
\includegraphics[scale=0.8]{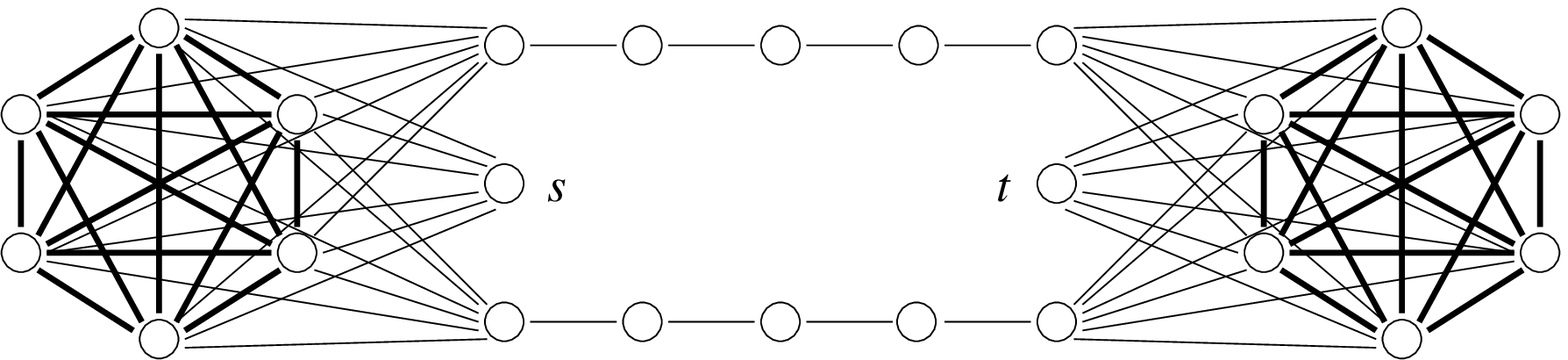}
\caption{The graph $G_{4,1}$.}
\label{tsp:fig:sympath}
\end{figure}

\begin{figure}[t]
\centering
\includegraphics[scale=0.8]{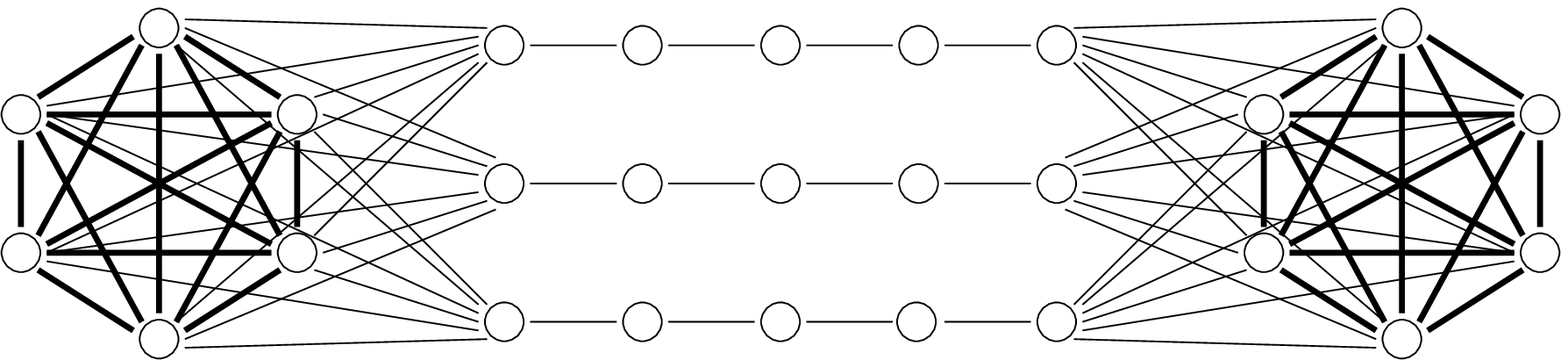}
\caption{The graph $G'_{4,1}$.}
\label{tsp:fig:symtour}
\end{figure}

\begin{lemma}
\label{tsp:lem:symint}
We have $\min_{x\in SP^\text{int}(K_n,s,t)}d\cdot x\ge 3\ell-2$.
\end{lemma}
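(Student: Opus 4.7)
My plan is to reduce to bounding the length of a walk in $G_{\ell,r}$, and then combine a parity argument from a family of two-edge cuts with a connectivity argument on the top- and bottom-path edges to obtain the bound $3\ell-2$.

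First, since $d$ is the shortest-path metric in $G_{\ell,r}$, any Hamiltonian $s$-$t$ path in $K_n$ can be expanded (by replacing each of its edges with a shortest $G_{\ell,r}$-path) into a walk $W$ in $G_{\ell,r}$ from $s$ to $t$ that visits every vertex and whose length equals $d\cdot x$. So it suffices to show that every such $W$ has length at least $3\ell-2$. Writing $e_i$ and $f_i$ for the number of times $W$ traverses the top edge $(u_i,u_{i+1})$ and the bottom edge $(v_i,v_{i+1})$ respectively ($0\le i\le \ell-1$), I in fact plan to show $\sum_{i=0}^{\ell-1}(e_i+f_i)\ge 3\ell-2$, which is enough. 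The case $\ell=1$ is trivial (any Hamiltonian path has cost at least $1$), so assume $\ell\ge 2$.

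The argument will rest on two structural facts. \emph{Parity:} for any $1\le a,b\le \ell$, the cut $C_{a,b}=\{s\}\cup L\cup\{u_0,\ldots,u_{a-1}\}\cup\{v_0,\ldots,v_{b-1}\}$ in $G_{\ell,r}$ has exactly two crossing edges, namely $(u_{a-1},u_a)$ and $(v_{b-1},v_b)$. Since $W$ goes from $s\in C_{a,b}$ to $t\notin C_{a,b}$, it crosses this cut an odd number of times, forcing $e_{a-1}+f_{b-1}$ to be odd; varying $a$ and $b$ then forces all $e_i$ to share a common parity and all $f_j$ to share the opposite parity. \emph{Connectivity:} each middle top node $u_i$ ($1\le i\le \ell-1$) has only its two path-neighbors as neighbors in $G_{\ell,r}$, so its degree in $W$ equals $e_{i-1}+e_i$, which must be positive (since $u_i$ is visited) and even (since $u_i\ne s,t$). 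Moreover, because $W$ is a single connected walk and middle top nodes have no edges outside the top path, every connected component of the multigraph $M_T$ of used top-path edges must contain $u_0$ or $u_\ell$; otherwise its nodes would be cut off from $s$ and $t$ in the walk's multigraph. The analogous statements hold for the bottom-path edges and $M_B$.

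To finish, assume without loss of generality that all $e_i$ are odd and all $f_j$ are even (the other case is symmetric). Then each $e_i\ge 1$, giving $\sum e_i\ge\ell$. On the even side, each $f_j$ is $0$ or $\ge 2$; by the connectivity fact, the set $\{j:f_j>0\}$ decomposes into at most two intervals, each of which must be anchored at position $0$ or at position $\ell-1$, and the middle-node degree constraint $f_{j-1}+f_j\ge 2$ rules out any gap of length $\ge 2$ in this set. A short case check on the resulting configurations (one spanning interval $[0,\ell-1]$, one interval missing a single endpoint $[0,\ell-2]$ or $[1,\ell-1]$, or two intervals with a length-$1$ gap $[0,b_1]\cup[b_1+2,\ell-1]$) shows that at least $\ell-1$ of the $f_j$ must be $\ge 2$, so $\sum f_j\ge 2(\ell-1)=2\ell-2$. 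Combining, $\sum_i(e_i+f_i)\ge\ell+(2\ell-2)=3\ell-2$. The main obstacle is the case analysis for $M_B$ above: one must rigorously rule out both "floating" components of $M_B$ not touching $v_0$ or $v_\ell$ (which would disconnect $W$'s multigraph from $s$ and $t$) and gaps of size $\ge 2$ in the used-edge positions (which would leave some middle bottom node with degree $0$), and confirm that the constraints in each remaining case really force the $2\ell-2$ lower bound.
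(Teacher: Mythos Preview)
Your proof is correct, but it takes a genuinely different route from the paper's. Both arguments begin the same way---reduce to an $s$--$t$ walk in $G_{\ell,r}$ visiting all vertices and count traversals of the $2\ell$ top- and bottom-path edges---but they diverge in how they force one side to be ``heavy''. The paper argues first that at most one of the $2\ell$ path edges can be entirely untraversed (otherwise the walk multigraph disconnects $s$ from $t$), and then proves by a short traversal-order case analysis that on at least one of the two paths, all but at most one edge is traversed at least twice: assuming two singly-traversed edges exist on each path, it tracks which of these four is crossed first and derives a contradiction. You instead extract the parity structure directly from the family of two-edge cuts $C_{a,b}$, which forces all $e_i$ to share one parity and all $f_j$ the other; this immediately makes one side odd ($\ge 1$ everywhere) and reduces the other side to bounding how many even $f_j$ can vanish, which you settle with the connectivity/degree argument.

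Your approach is a bit more structural and avoids reasoning about the temporal order of edge crossings; the paper's is shorter and more ad hoc. Both reach the same inequality $\sum_i e_i + \sum_j f_j \ge \ell + 2(\ell-1) = 3\ell-2$. One small remark: in your case analysis for $M_B$ you should also note that the empty configuration (all $f_j=0$) is excluded for $\ell\ge 2$ by the degree constraint at $v_1$; you implicitly use this but don't list it among your cases.
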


\begin{lemma}
\label{tsp:lem:symfrac}
We have $\min_{x\in N_+^r(SP(K_n,s,t))}d\cdot x\le 2\ell+6r+9$.
\end{lemma}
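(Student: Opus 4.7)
The plan is to reduce Lemma~\ref{tsp:lem:symfrac} to Cheung's lower bound for the symmetric TSP tour problem \cite{Che} via Lemma~\ref{tsp:lem:pathtour}. Cheung's instances can be taken to be exactly the graphs $G'_{\ell,r}$ just introduced, and his argument exhibits a vector $x' \in N_+^r(ST(G'_{\ell,r}))$ certifying the $4/3 - o(1)$ integrality gap for the symmetric tour. The specific feature of his construction I would exploit---the ``specific property'' alluded to in the introduction---is that this $x'$ assigns value $1$ to each of the $\ell$ new edges on the $s$-$t$ path of $G'_{\ell,r}$. This is a natural outcome: the interior nodes of the new $s$-$t$ path have degree $2$ in $G'_{\ell,r}$, so in any tour solution both of their incident edges must be saturated, making it consistent to choose Cheung's fractional $x'$ so that the entire $s$-$t$ path is integral.

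Once $x'$ is in hand, Lemma~\ref{tsp:lem:pathtour} immediately gives that the restriction $x = (x'_e)_{e \in E_{\ell,r}}$ lies in $N_+^r(SP(G_{\ell,r}, s, t))$. I would then extend $x$ to a vector $\hat{x}$ on the edges of $K_n$ by setting $\hat{x}_e = 0$ on every edge of $K_n$ not in $E_{\ell,r}$, and verify $\hat{x} \in N_+^r(SP(K_n, s, t))$ by padding protection matrices with zeros: given a protection matrix $X$ witnessing $x \in N_+^r(SP(G_{\ell,r}, s, t))$, the matrix $\hat{X}$ obtained by adjoining rows and columns of zeros for the missing edges remains symmetric and positive semidefinite, has the correct $0$th row and diagonal, and satisfies the cone conditions inductively (trivially for the new-edge rows, which are zero; and for the old-edge rows, since the zero-padded extensions of vectors in $N_+^{r-1}(SP(G_{\ell,r}, s, t))$ lie in $N_+^{r-1}(SP(K_n, s, t))$ by the same argument).

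Finally I would compute $d \cdot \hat{x}$ directly. Since each edge in $E_{\ell,r}$ is a unit-length edge of the unweighted graph $G_{\ell,r}$ and is thus a shortest path between its endpoints, $d_e = 1$ for every $e \in E_{\ell,r}$; combined with $\hat{x}_e = 0$ outside $E_{\ell,r}$, this gives $d \cdot \hat{x} = \sum_{e \in E_{\ell,r}} x'_e$. The tour degree constraints $x'(\delta_{G'_{\ell,r}}(v)) = 2$ at every $v \in V'_{\ell,r}$ yield by double counting $\sum_{e \in E'_{\ell,r}} x'_e = |V'_{\ell,r}| = 3\ell + 6r + 9$, and subtracting the contribution of the $\ell$ new $s$-$t$ path edges (each of value $1$) leaves $\sum_{e \in E_{\ell,r}} x'_e = 2\ell + 6r + 9$, exactly the claimed bound. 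The main obstacle is the first step: verifying that Cheung's fractional solution can indeed be taken to have $x'_e = 1$ on every new $s$-$t$ path edge. This is not explicitly stated in \cite{Che}, so it requires inspecting the particular protection matrices Cheung builds (or arguing that his solution can be modified to this form without leaving $N_+^r(ST(G'_{\ell,r}))$), which is precisely why the reduction is nongeneric rather than a black-box application of Cheung's theorem.
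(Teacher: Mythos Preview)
Your proposal is correct and follows essentially the same route as the paper: invoke Cheung's result to obtain $x' \in N_+^r(ST(G'_{\ell,r}))$ with $x'_e=1$ on the new $s$--$t$ path edges, apply Lemma~\ref{tsp:lem:pathtour}, zero-extend to $K_n$, and compute the objective (the paper writes out $x$ explicitly and sums the edge values directly rather than using your degree-sum identity $\sum_{e\in E'_{\ell,r}}x'_e=|V'_{\ell,r}|$, but the content is identical). Your stated ``main obstacle'' is in fact not one: as you yourself observe, for $\ell\ge 2$ every new edge is incident to a degree-$2$ node of $G'_{\ell,r}$, so the constraint $x'(\delta(v))=2$ already forces $x'_e=1$ on all new edges for \emph{any} $x'\in ST(G'_{\ell,r})\supseteq N_+^r(ST(G'_{\ell,r}))$, and no inspection of Cheung's protection matrices is needed.
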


Combining Lemma~\ref{tsp:lem:symint} and Lemma~\ref{tsp:lem:symfrac}, we find that the ratio on the left side of (\ref{tsp:eq:symgoal}) is at least $\frac{3\ell-2}{2\ell+6r+9}$, which is at least $3/2-\epsilon$ provided $\ell$ is large enough and $r\le\gamma'\ell$ for some small enough $\gamma'>0$ depending on $\epsilon$. The latter is implied by $r=\gamma n$ for some small enough $\gamma>0$ depending on $\gamma'$. It remains to prove Lemma~\ref{tsp:lem:symint} and Lemma~\ref{tsp:lem:symfrac}.

\begin{proof}[Proof of Lemma~\ref{tsp:lem:symint}]
This is equivalent to lower-bounding the cost of a (non-simple) $s$-$t$ path in $G_{\ell,r}$ that visits each node at least once. Given such a path, first note that at most one of the edges on the top and bottom paths in $G_{\ell,r}$ is not traversed (otherwise some node would not be visited). We claim that for either the top path or the bottom path, all but at most one of the edges are traversed at least twice. Suppose not; then there exist two top edges and two bottom edges, all four of which are traversed at most once. By symmetry, assume the first of these four edges to be traversed is on top; then the $s$-$t$ path cannot return to the left clique without first going to the right clique. Once in the right clique, the path must traverse at least one of the two special bottom edges, but then there is no way to get back to the right clique to end at $t$. This proves the claim. These observations imply that the contribution of the top and bottom paths to the cost of the $s$-$t$ path is at least $3\ell-2$.
\end{proof}

\begin{proof}[Proof of Lemma~\ref{tsp:lem:symfrac}]
Consider the vector $x=(x_e)_{e\in E_{\ell,r}}$ defined as follows. The edges $e$ on the two paths all have $x_e=1$. The edges $e$ within the two cliques all have $x_e=\frac{2-1/(r+1)}{3r+2}$. All remaining edges $e$ have $x_e=1/(3r+3)$ (that is, those edges that go between a clique and an endpoint of one of the paths, or are incident to $s$ or $t$). Let $x'=(x'_e)_{e\in E'_{\ell,r}}$ be the same as $x$ but with all new edges $e$ in $E'_{\ell,r}$ having $x'_e=1$. Cheung \cite{Che} proves that $x'\in N_+^r(ST(G'_{\ell,r}))$; by Lemma~\ref{tsp:lem:pathtour} this implies $x\in N_+^r(SP(G_{\ell,r},s,t))$. Then a simple argument shows that extending $x$ with $0$ values for all the edges in $E\backslash E_{\ell,r}$ yields a point $\hat{x}\in N_+^r(SP(K_n,s,t))$ such that $d\cdot\hat{x}=\sum_{e\in E_{\ell,r}}x_e$ (using the fact that every edge in $G_{\ell,r}$ is the shortest path between its endpoints). We have \[\textstyle\sum_{e\in E_{\ell,r}}x_e~=~2\ell\cdot 1+2\genfrac{(}{)}{0pt}{}{3r+3}{2}\cdot\frac{2-1/(r+1)}{3r+2}+6(3r+3)\cdot\frac{1}{3r+3}~=~2\ell+6r+9\] so $\hat{x}$ witnesses Lemma~\ref{tsp:lem:symfrac}.
\end{proof}

%%%%%%%%%%%%%%%%%%%%%%%%%%%%%%%%%%%%%%%%%%%%%%%%%%%%%%%%%%%%%%%%%%%%%%%%%%%%%%
%%%%%%%%%%%%%%%%%%%%%%%%%%%%%%%%%%%%%%%%%%%%%%%%%%%%%%%%%%%%%%%%%%%%%%%%%%%%%%

\section{Open Problems}
\label{tsp:sec:open}

There are many open problems on approximability and integrality gaps for variants of the traveling salesperson problem; we give a small sample.

It is open to prove a nontrivial lower bound on the integrality gap of the standard relaxation for the symmetric TSP tour problem after two rounds of the Sherali-Adams procedure. A key technique used in most previous work on Sherali-Adams integrality gap lower bounds is the technique of constructing ``locally consistent distributions'' (introduced in \cite{FVKM,CMM}). However, this technique only seems to apply to relaxations with local constraints, whereas the cut constraints for TSP are very global.

Regarding Theorem~\ref{tsp:thm:asym}, it would be interesting to overcome any of the four deficiencies described in Section~\ref{tsp:sec:intro:results}.

Finally, can the lift-and-project procedures of Lov{\'a}sz-Schrijver or Sherali-Adams be employed to get improved algorithms for any variant of the traveling salesperson problem?

%%%%%%%%%%%%%%%%%%%%%%%%%%%%%%%%%%%%%%%%%%%%%%%%%%%%%%%%%%%%%%%%%%%%%%%%%%%%%%
%%%%%%%%%%%%%%%%%%%%%%%%%%%%%%%%%%%%%%%%%%%%%%%%%%%%%%%%%%%%%%%%%%%%%%%%%%%%%%

\section*{Acknowledgments}

I thank Luca Trevisan for suggesting to explore lift-and-project procedures applied to the traveling salesperson problem, and I thank Siu On Chan for discussions on lift-and-project procedures.

%%%%%%%%%%%%%%%%%%%%%%%%%%%%%%%%%%%%%%%%%%%%%%%%%%%%%%%%%%%%%%%%%%%%%%%%%%%%%%
%%%%%%%%%%%%%%%%%%%%%%%%%%%%%%%%%%%%%%%%%%%%%%%%%%%%%%%%%%%%%%%%%%%%%%%%%%%%%%

\bibliographystyle{alpha}
\bibliography{tsp}

\newcommand{\etalchar}[1]{$^{#1}$}
\begin{thebibliography}{FdlVKM07}

\bibitem[AAT05]{AAT}
Michael Alekhnovich, Sanjeev Arora, and Iannis Tourlakis.
\newblock Towards strong nonapproximability results in the
  {L}ov{\'a}sz-{S}chrijver hierarchy.
\newblock In {\em Proceedings of the 37th ACM Symposium on Theory of
  Computing}, pages 294--303, 2005.

\bibitem[ABCC07]{Book2}
David Applegate, Robert Bixby, Vasek Chvatal, and William Cook.
\newblock {\em The Traveling Salesman Problem: A Computational Study}.
\newblock Princeton University Press, 2007.

\bibitem[AGK{\etalchar{+}}98]{AGKKW}
Sanjeev Arora, Michelangelo Grigni, David Karger, Philip Klein, and Andrzej
  Woloszyn.
\newblock A polynomial-time approximation scheme for weighted planar graph
  {TSP}.
\newblock In {\em Proceedings of the 9th ACM-SIAM Symposium on Discrete
  Algorithms}, pages 33--41, 1998.

\bibitem[AGM{\etalchar{+}}10]{AGMOGS}
Arash Asadpour, Michel Goemans, Aleksander Madry, Shayan Oveis~Gharan, and Amin
  Saberi.
\newblock An ${O}(\log n/\log\log n)$-approximation algorithm for the
  asymmetric traveling salesman problem.
\newblock In {\em Proceedings of the 21st ACM-SIAM Symposium on Discrete
  Algorithms}, pages 379--389, 2010.

\bibitem[Aro98]{Aro}
Sanjeev Arora.
\newblock Polynomial time approximation schemes for {E}uclidean traveling
  salesman and other geometric problems.
\newblock {\em Journal of the ACM}, 45(5):753--782, 1998.

\bibitem[AS11]{AS}
Hyung-Chan An and David Shmoys.
\newblock {LP}-based approximation algorithms for traveling salesman path
  problems.
\newblock {\em CoRR}, abs/1105.2391, 2011.

\bibitem[CGK06]{CGK}
Moses Charikar, Michel Goemans, and Howard Karloff.
\newblock On the integrality ratio for the asymmetric traveling salesman
  problem.
\newblock {\em Mathematics of Operations Research}, 31(2):245--252, 2006.

\bibitem[Che05]{Che}
Kevin Cheung.
\newblock On {L}ov{\'a}sz-{S}chrijver lift-and-project procedures on the
  {D}antzig-{F}ulkerson-{J}ohnson relaxation of the {TSP}.
\newblock {\em SIAM Journal on Optimization}, 16(2):380--399, 2005.

\bibitem[Chr76]{Chr}
Nicos Christofides.
\newblock Worst-case analysis of a new heuristic for the travelling salesman
  problem.
\newblock Technical Report 388, Graduate School of Industrial Administration,
  Carnegie Mellon University, 1976.

\bibitem[CMM09]{CMM}
Moses Charikar, Konstantin Makarychev, and Yury Makarychev.
\newblock Integrality gaps for {S}herali-{A}dams relaxations.
\newblock In {\em Proceedings of the 41st ACM Symposium on Theory of
  Computing}, pages 283--292, 2009.

\bibitem[DFJ54]{DFJ}
George Dantzig, Delbert Fulkerson, and Selmer Johnson.
\newblock Solution of a large-scale traveling-salesman problem.
\newblock {\em Journal of the Operations Research Society of America},
  2:393--410, 1954.

\bibitem[FdlVKM07]{FVKM}
Wenceslas Fernandez de~la Vega and Claire Kenyon-Mathieu.
\newblock Linear programming relaxations of maxcut.
\newblock In {\em Proceedings of the 18th ACM-SIAM Symposium on Discrete
  Algorithms}, pages 53--61, 2007.

\bibitem[FS07]{FS}
Uriel Feige and Mohit Singh.
\newblock Improved approximation ratios for traveling salesperson tours and
  paths in directed graphs.
\newblock In {\em Proceedings of the 10th International Workshop on
  Approximation Algorithms for Combinatorial Optimization Problems}, pages
  104--118, 2007.

\bibitem[FSS10]{FSS}
Zachary Friggstad, Mohammad Salavatipour, and Zoya Svitkina.
\newblock Asymmetric traveling salesman path and directed latency problems.
\newblock In {\em Proceedings of the 21st ACM-SIAM Symposium on Discrete
  Algorithms}, pages 419--428, 2010.

\bibitem[GMPT10]{GMPT}
Konstantinos Georgiou, Avner Magen, Toniann Pitassi, and Iannis Tourlakis.
\newblock Integrality gaps of $2-o(1)$ for vertex cover {SDP}s in the
  {L}ov{\'a}sz-{S}chrijver hierarchy.
\newblock {\em SIAM Journal on Computing}, 39(8):3553--3570, 2010.

\bibitem[GMT09]{GMT}
Konstantinos Georgiou, Avner Magen, and Madhur Tulsiani.
\newblock Optimal {S}herali-{A}dams gaps from pairwise independence.
\newblock In {\em Proceedings of the 12th International Workshop on
  Approximation Algorithms for Combinatorial Optimization Problems}, pages
  125--139, 2009.

\bibitem[GP07]{Book3}
Gregory Gutin and Abraham Punnen, editors.
\newblock {\em The Traveling Salesman Problem and Its Variations}.
\newblock Springer, 2007.

\bibitem[HK70]{HK1}
Michael Held and Richard Karp.
\newblock The traveling-salesman problem and minimum spanning trees.
\newblock {\em Operations Research}, 18:1138--1162, 1970.

\bibitem[HK71]{HK2}
Michael Held and Richard Karp.
\newblock The traveling-salesman problem and minimum spanning trees: Part {II}.
\newblock {\em Mathematical Programming}, 1:6--25, 1971.

\bibitem[Hoo91]{Hoo}
Han Hoogeveen.
\newblock Analysis of {C}hristofides’ heuristic: Some paths are more
  difficult than cycles.
\newblock {\em Operations Research Letters}, 10(5):291--295, 1991.

\bibitem[Kle08]{Kle}
Philip Klein.
\newblock A linear-time approximation scheme for {TSP} in undirected planar
  graphs with edge-weights.
\newblock {\em SIAM Journal on Computing}, 37(6):1926--1952, 2008.

\bibitem[LLRKS85]{Book1}
Eugene Lawler, Jan Lenstra, Alexander Rinnooy~Kan, and David Shmoys, editors.
\newblock {\em The Traveling Salesman Problem: A Guided Tour of Combinatorial
  Optimization}.
\newblock Wiley, 1985.

\bibitem[LS91]{LS}
L{\'a}szl{\'o} Lov{\'a}sz and Alexander Schrijver.
\newblock Cones of matrices and set-functions and 0-1 optimization.
\newblock {\em SIAM Journal on Optimization}, 1:166--190, 1991.

\bibitem[MS11]{MS}
Tobias M{\"o}mke and Ola Svensson.
\newblock Approximating graphic {TSP} by matchings.
\newblock {\em CoRR}, abs/1104.3090, 2011.

\bibitem[Nag08]{Nag}
Viswanath Nagarajan.
\newblock On the {LP} relaxation of the asymmetric traveling salesman path
  problem.
\newblock {\em Theory of Computing}, 4(1):191--193, 2008.

\bibitem[Ngu08]{Ngu}
Th{\`a}nh Nguyen.
\newblock A simple {LP} relaxation for the asymmetric traveling salesman
  problem.
\newblock In {\em Proceedings of the 11th International Workshop on
  Approximation Algorithms for Combinatorial Optimization Problems}, pages
  207--218, 2008.

\bibitem[OGS11]{OGS}
Shayan Oveis~Gharan and Amin Saberi.
\newblock The asymmetric traveling salesman problem on graphs with bounded
  genus.
\newblock In {\em Proceedings of the 22nd ACM-SIAM Symposium on Discrete
  Algorithms}, pages 967--975, 2011.

\bibitem[OGSS10]{OGSS}
Shayan Oveis~Gharan, Amin Saberi, and Mohit Singh.
\newblock A randomized rounding approach to the traveling salesman problem.
\newblock Manuscript, December 2010.

\bibitem[PV06]{PV}
Christos Papadimitriou and Santosh Vempala.
\newblock On the approximability of the traveling salesman problem.
\newblock {\em Combinatorica}, 26(1):101--120, 2006.

\bibitem[SA90]{SA}
Hanif Sherali and Warren Adams.
\newblock A hierarchy of relaxations between the continuous and convex hull
  representations for zero-one programming problems.
\newblock {\em SIAM Journal on Discrete Mathematics}, 3(3):411--430, 1990.

\bibitem[STT07]{STT}
Grant Schoenebeck, Luca Trevisan, and Madhur Tulsiani.
\newblock Tight integrality gaps for {L}ov{\'a}sz-{S}chrijver {LP} relaxations
  of vertex cover and max cut.
\newblock In {\em Proceedings of the 39th ACM Symposium on Theory of
  Computing}, pages 302--310, 2007.

\bibitem[SW90]{SW}
David Shmoys and David Williamson.
\newblock Analyzing the {H}eld-{K}arp {TSP} bound: A monotonicity property with
  application.
\newblock {\em Information Processing Letters}, 35(6):281--285, 1990.

\bibitem[Tul11]{Tul}
Madhur Tulsiani.
\newblock {L}ov{\'a}sz-{S}chrijver reformulation.
\newblock In {\em Wiley Encyclopedia of Operations Research and Management
  Science}. Wiley, 2011.

\bibitem[Wol80]{Wol}
Laurence Wolsey.
\newblock Heuristic analysis, linear programming and branch and bound.
\newblock {\em Mathematical Programming Study}, 13:121--134, 1980.

\end{thebibliography}

\end{document}